\documentclass[sigconf]{acmart}
\acmConference[ICSE 2024]{46th International Conference on Software Engineering}{April 2024}{Lisbon, Portugal}

\AtBeginDocument{%
  \providecommand\BibTeX{{%
    \normalfont B\kern-0.5em{\scshape i\kern-0.25em b}\kern-0.8em\TeX}}}

\copyrightyear{2024}
\acmYear{2024}
\setcopyright{rightsretained}
\acmConference[ICSE '24]{2024 IEEE/ACM 46th International Conference on Software Engineering}{April 14--20, 2024}{Lisbon, Portugal}
\acmBooktitle{2024 IEEE/ACM 46th International Conference on Software Engineering (ICSE '24), April 14--20, 2024, Lisbon, Portugal}\acmDOI{10.1145/3597503.3639151}
\acmISBN{979-8-4007-0217-4/24/04}

\usepackage{algorithmic}
\usepackage{graphicx}
\usepackage{textcomp}
\usepackage{xcolor}

\newtheorem{definition}{Definition}
\usepackage{todonotes}
\usepackage{comment}

\usepackage[normalem]{ulem}

\usepackage[ruled,linesnumbered]{algorithm2e}
\usepackage{hyperref}

\usepackage{multirow}
\usepackage{tabularx}
\usepackage{enumitem}
\usepackage{caption}
\usepackage{verbatim}
\usepackage{diagbox}

\usepackage{makecell}
\usepackage{balance}

\usepackage{listings}
\usepackage{fancyvrb}

\usepackage{tikz}
\usepackage{tikz-qtree}

\newtheorem{theorem}{Theorem}[section]

\newtheorem{proposition}[theorem]{Proposition}

\newtheorem{example}[theorem]{Example}

\newcommand{\coolname}{$\mathtt{REDriver}$\xspace}

\begin{document}

\title{REDriver: Runtime Enforcement for Autonomous Vehicles}

\author{Yang Sun}
\orcid{0000-0002-2409-2160}
 \affiliation{%
   \institution{Singapore Management University}
   \country{Singapore}
}
\email{yangsun.2020@phdcs.smu.edu.sg}

\author{Christopher M. Poskitt}
\orcid{0000-0002-9376-2471}
\affiliation{\institution{Singapore Management University}\country{Singapore}}
\email{cposkitt@smu.edu.sg}

 \author{Xiaodong Zhang}
 \orcid{0000-0002-8380-1019}
 \affiliation{%
   \institution{Xidian University}
   \country{China}
   }
 \email{zhangxiaodong@xidian.edu.cn}

\author{Jun Sun}
\orcid{0000-0002-3545-1392}
\affiliation{\institution{Singapore Management University}\country{Singapore}}
\email{junsun@smu.edu.sg}

\begin{abstract}
    Autonomous driving systems (ADSs) integrate sensing, perception, drive control, and several other critical tasks in autonomous vehicles, motivating research into techniques for assessing their safety.
    While there are several approaches for testing and analysing them in high-fidelity simulators, ADSs may still encounter additional critical scenarios beyond those covered once they are deployed on real roads.
    An additional level of confidence can be established by monitoring and enforcing critical properties when the ADS is running.
    Existing work, however, is only able to monitor simple safety properties (e.g., avoidance of collisions) and is limited to blunt enforcement mechanisms such as hitting the emergency brakes.
    In this work, we propose \coolname, a general and modular approach to runtime enforcement, in which users can specify a broad range of properties (e.g., national traffic laws) in a specification language based on signal temporal logic~(STL).
    \coolname monitors the planned trajectory of the ADS based on a quantitative semantics of STL, and uses a gradient-driven algorithm to repair the trajectory when a violation of the specification is likely.
    We implemented \coolname for two versions of Apollo (i.e., a popular ADS), and subjected it to a benchmark of violations of Chinese traffic laws.
    The results show that \coolname significantly improves Apollo's conformance to the specification with minimal overhead.
\end{abstract}

\maketitle

\section{Introduction}

Autonomous driving systems~(ADSs) are the core of autonomous vehicles~(AVs), integrating sensing, perception, drive control, and several other tasks that are necessary for automating their journeys.
Given the safety-critical nature of ADSs \cite{favaro2017examining,dixit2016autonomous}, it is imperative that they operate safely at all times, including in rare or unexpected scenarios that may not have been explicitly considered when the system was designed.
This has spurred a multitude of research into techniques for establishing confidence in an ADS, e.g., by modelling and verifying aspects of its design~\cite{Gu-et_al19a}, by subjecting it to reconstructions of real-world accidents~\cite{Bashetty20DeepCrashTest}, or by testing it against automatically generated critical scenarios~\cite{li2020av,Sun-Poskitt-et_al22a,Zhou-et_al23a} in a high-fidelity simulator such as CARLA~\cite{dosovitskiy2017carla} or LGSVL~\cite{rong2020lgsvl}.

These approaches all analyse an ADS \emph{before} it is deployed on real roads, where it may still encounter additional scenarios beyond those that were covered. In fact, an analysis of accidents involving autonomous vehicles~\cite{mccarthy2022autonomous} suggests that the broader implementation of current AV technologies may not lead to a reduction in vehicle crash frequency.
An additional level of confidence can thus be established if desirable properties are also monitored---even enforced---while the ADS is running.
This is the idea of \emph{runtime enforcement}, a technique that observes the execution of a system and then modifies it in a minimal way to ensure certain properties are satisfied.
In AVs, runtime enforcement has been applied, for example, to monitor basic safety properties such as the avoidance of collisions, applying the emergency brake before they are violated~\cite{Grieser-et_al20a}.
Avoiding collisions, however, is not enough in general.
ADSs are expected to satisfy a broader range of complicated properties concerning the overall traffic systems they operate in, such as national traffic laws that describe how vehicles should behave with respect to various junctions, signals, and (most precariously) other vehicles or pedestrians.
Currently, no existing approach supports runtime enforcement of properties in this direction.

In this work, we aim to provide a general solution to the runtime enforcement problem for AVs.
In particular, we propose \coolname, a general framework for runtime enforcement that can be integrated into ADSs with state-of-the-art modular designs, as exhibited by Apollo~\cite{apollo70} and Autoware~\cite{autoware}.
\coolname allows users to specify desirable properties of AVs using an existing and powerful domain-specific language~(DSL) based on signal temporal logic~(STL).
This language supports properties ranging from the simplest, concerning collision avoidance, through to entire formalisations of national traffic laws~\cite{Sun-Poskitt-et_al22a}.
\coolname monitors the planned trajectories and command sequences of the ADS at runtime and assesses them against the user's specifications.
If the AV is predicted to potentially violate them in the near future (based on a quantitative semantics of STL), \coolname repairs the trajectories using a gradient-driven algorithm.
Furthermore, it does so while minimising the ``overhead'' (or change) to the original journey. That is, by efficiently computing the gradient of each signal (with respect to the robustness degree of the STL formula), we identify and modify the signal that is most likely to repair the trajectories.

\begin{figure}[t]
    \centering
    \includegraphics[width=1\linewidth]{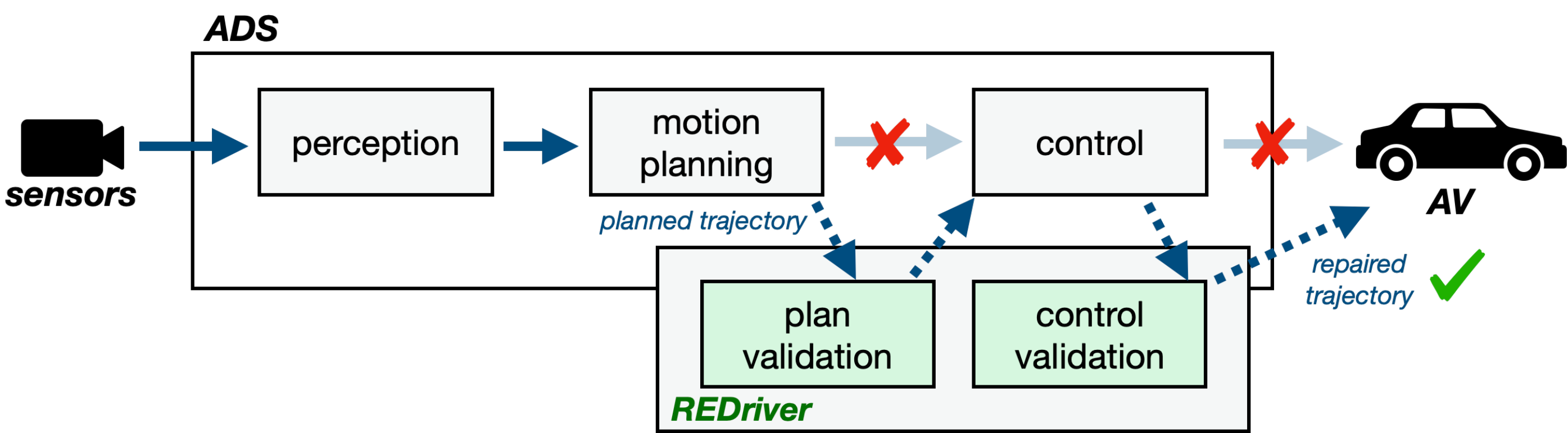}
    \caption{The architecture of an ADS with \coolname}
    \label{fig:workflow}
\end{figure}

\coolname has been implemented for two versions of Apollo (i.e., versions 6.0 and 7.0, the latest at the time of experimentation).
The implementation consists of a \emph{plan validation} algorithm and a \emph{control validation} algorithm that respectively observe and modify (if necessary) the outputs of the ADSs' motion planning and control modules. Note that the motion planning and control modules are black boxes to us.
In particular, we enforce that these outputs (i.e., planned trajectories and command sequences) do not lead to violations---whenever possible---of a comprehensive formalisation of the Chinese traffic laws.
This goes far beyond existing runtime enforcement approaches, which focus on simple safety properties (e.g., collision avoidance) and blunt enforcement mechanisms (e.g., hitting the emergency brakes).
Figure~\ref{fig:workflow} depicts how \coolname is integrated into the modular design of Apollo. In particular, we have added two new modules while ensuring that the existing modules and their inner logic remain unchanged. In the diagram, the perception, motion planning, and control boxes represent the existing Apollo modules, while the green plan validation and control validation boxes represent the new modules from \coolname. The arrow denotes the flow of signal transmission.
We evaluated our implementation of \coolname against a benchmark of violation-inducing scenarios for Chinese traffic laws~\cite{Sun-Poskitt-et_al22a}, finding that our runtime enforcement approach significantly reduces the likelihood of those violations occurring.
Furthermore, \coolname's overhead in terms of time and how often it intervenes is negligible.

\section{Background and Problem}
\label{sec:overview_and_background}

In this section, we review the architecture of ADSs, the DSL for specifying safety properties, and then define our problem.

\subsection{Overview of Autonomous Driving Systems}
State-of-the-art open-source ADSs such as Apollo~\cite{apollo60} and Autoware~\cite{autoware} have similar architectures.
They are typically organised into loosely coupled modules that communicate via message-passing.
Three of these modules are particularly relevant to our context, i.e., perception, motion planning, and control.

\begin{table}[!t]
\setlength{\abovecaptionskip}{0.cm}
	\centering
	\caption{An example planned trajectory}
	\label{tab:example_of_the_planned_trajectory}
    \begin{tabular}{|c|c|c|c|c|c|} \hline
         Time & Position & Speed & Acc & Steer & Gear  \\
         \hline
         0 & (x: 0, y: 0)           & 7.01 & -0.05 & 0 & $\mathtt{DRIVE}$\\     
         2 & (x: 0, y: 13.34)       & 6.13 & -0.48 & 0 & $\mathtt{DRIVE}$\\     
         4 & (x: 0, y: 24.83)       & 5.44 & -0.24 & 0 & $\mathtt{DRIVE}$\\     
         6 & (x: 0, y: 35.85)       & 5.09 & -0.18 & 0 & $\mathtt{DRIVE}$\\     
         8 & (x: 0, y: 44.75)       & 3.89 & -1.44 & 0 & $\mathtt{DRIVE}$\\   \hline   
         

    \end{tabular}
\vspace{-0.5cm}
\end{table}

\begin{table}[!t]
\setlength{\abovecaptionskip}{0.cm}
	\centering
	\caption{An example predicted environment}
	\label{tab:example_of_the_predicted_environment}
    \begin{tabular}{|c|c|c|c|c|c|c|} \hline
         Type & Time & Position & Speed & Acc & Steer  \\
         \hline
         \multirow{4}{*}{Car1}  & 0 & (x: 2.5, y: 5)            & 7.42 & -0.05 & -7.25 \\ 
                                & 2 & (x: 1.67, y: 18.34)       & 6.37 & -0.48 & -11.10 \\ 
                                & 4 & (x: 0, y: 29.88)          & 5.44 & -0.24 & 0 \\ 
                                & 6 & (x: 0, y: 40.87)          & 5.09 & -0.18 & 0 \\ 
                                & 8 & (x: 0, y: 49.76)          & 3.89 & -1.44 & 0 \\ \hline   
        \multirow{4}{*}{Car2}   & 0 & (x: -2.5, y: 15)      & 0 & 0 & 0 \\ 
                                & 2 & (x: -2.5, y: 15)      & 0 & 0 & 0 \\ 
                                & 4 & (x: -2.5, y: 15)      & 0 & 0 & 0 \\ 
                                & 6 & (x: -2.5, y: 15)      & 0 & 0 & 0 \\ 
                                & 8 & (x: -2.5, y: 15)      & 0 & 0 & 0 \\ \hline   
        \multirow{4}{*}{Ped1}   & 0 & (x: 0.23, y: 48)      & 0 & 0 & 0 \\ 
                                & 2 & (x: 0.23, y: 48)      & 0 & 0 & 0 \\ 
                                & 4 & (x: 0.23, y: 48)      & 0 & 0 & 0 \\ 
                                & 6 & (x: 0.23, y: 48)      & 0 & 0 & 0 \\ 
                                & 8 & (x: 0.23, y: 48)      & 0 & 0 & 0 \\ \hline   
        TL-ID & Time & Color & Blink & -- & -- \\
        \hline
         \multirow{4}{*}{TL-0}  & 0 & GREEN     & False & -- & --\\ 
                                & 2 & YELLOW    & False & -- & --\\ 
                                & 4 & YELLOW    & False & -- & --\\ 
                                & 6 & YELLOW    & False & -- & --\\ 
                                & 8 & RED       & False & -- & --\\ \hline   
    \end{tabular}
\vspace{-0.5cm}
\end{table}

First, the perception module receives sensor readings (e.g., from a camera or LIDAR), processes them, and then feeds them to the motion planning module.
Second, the motion planning module generates a \emph{planned trajectory} based on the map, the destination, the sensor inputs, and the state of the ego vehicle, i.e., the one under the control of the ADS.
Intuitively, the planned trajectory describes where the vehicle will be at future time points, and is computed based on a predicted environment that includes, for example, the predicted trajectories of other vehicles (NPCs, non-player characters), pedestrians, and traffic lights.
For instance, Table~\ref{tab:example_of_the_planned_trajectory} shows a planned trajectory for an ego vehicle with respect to the predicted environment shown in Table~\ref{tab:example_of_the_predicted_environment}.
Here, the ego vehicle slows down before approaching an intersection as the traffic light is changing to red.
Every line in Table~\ref{tab:example_of_the_planned_trajectory} represents a planned \emph{waypoint}, i.e., the planned position, speed, acceleration, steer, and gear of the ego vehicle at a series of future time points. Note that an actual planned trajectory typically contains hundreds of waypoints. 
Similarly, every line in Table~\ref{tab:example_of_the_predicted_environment} corresponds to the predicted states of NPCs such as vehicles and pedestrians, as well as environmental parameters like traffic lights. Here, Car2 and Ped1 are predicted to be stationary, Car1 is predicated to change lanes, and the color of the traffic light ahead is predicted to change from green to yellow and eventually to red.
Furthermore, in general there may be multiple planned trajectories for a given destination, and the planning module attempts to find the ``best'' one.
Finally, the control module translates the planned trajectory into control commands (e.g., `brake', and `signal') so that the ego vehicle is likely to follow the planned trajectory, i.e., passing through the waypoints with the planned speed, acceleration, steering angle, and gear position.
We refer to~\cite{apollo60,autoware} for details on how commands are generated.

There may be other modules in an ADS (e.g., the map module in Apollo) or the above-mentioned modules may be further divided into sub-modules (e.g., motion planning in Apollo is divided into routing, prediction, and planning).
Nonetheless, the similar high-level design of existing ADSs implies we could potentially introduce a module for runtime monitoring and enforcement which sits in-between existing modules, i.e., to intercept, analyse, and alter (if necessary) the inter-module messages. This way, runtime monitoring and enforcement can be introduced without changing the inner logic of existing modules. For instance, given the planned trajectory generated by the planning module shown in Table~\ref{tab:example_of_the_planned_trajectory}, if we decide that the trajectory could potentially lead to the violation of a certain property, we can simply modify the planned trajectory before forwarding it to the control module (to trigger a different control command generation).

\subsection{Property Specification}
\label{sec:Specification definition}
To go beyond the simplest safety requirements (e.g., `the ego vehicle does not collide'), we require a specification language that is able to express a rich set of properties that are relevant to autonomous vehicles and driving in general.
In this work, we adopt the driver-oriented specification language of LawBreaker~\cite{Sun-Poskitt-et_al22a}, which is based on signal temporal logic~(STL), and has been demonstrated to be expressive enough to specify the traffic laws of China and Singapore.
We highlight its key features, referring readers to~\cite{Sun-Poskitt-et_al22a} for details.

\begin{figure}[t]
    \centering
    \begin{align*}
    \varphi\;:=&\;\mu \;
     |\;\neg\varphi\;
     |\;\varphi_1 \lor \varphi_2\;|\; \varphi_1 \land \varphi_2\;|\; \varphi_1 \;\mathtt{U_I}\; \varphi_2\;\\ 
    \mu\;:=&f(x_0,x_1,\cdots, x_k) \sim 0 \ \ \sim \;:= > \;| \geq\;| <\;| \leq\;| \neq\;| =;
\end{align*}
    \caption{Specification language syntax, where $\varphi$, $\varphi_1$ and $\varphi_2$ are STL formulas, $I$ is an interval, and $f$ is a multivariate linear continuous function over language variables $x_i$}
    \label{syntax}
\vspace{-0.5cm}
\end{figure}

The high-level syntax of the language is shown in Figure~\ref{syntax}. A time interval $I$ is of the form $[l,u]$ where $l$ and $u$ are respectively the lower and upper bounds of the interval. 
Following convention, we write $\Diamond_I \; \varphi$ to denote $true \; \mathtt{U}_I \; \varphi$; and $\Box_I \; \varphi$ to denote $\neg \; \Diamond_I \; \neg \varphi$. Intuitively, $\mathtt{U}$, $\Box$, and $\Diamond$ are modal operators that are respectively interpreted as `until', `always', and `eventually'. Note that the time interval is omitted when it is $[0, \infty]$. 
The propositions in this language are constructed using 17 variables and 16 functions that are relevant to AVs, some of which are shown in Tables~\ref{tab:planning_trajectory_variables}.
In general, $\mu$ can be regarded as a proposition of the form $f(x_0,x_1,\cdots, x_k) \sim 0$ where $f$ is a multivariate linear continuous function and $x_i$ for all $i$ in $[0,k]$ is a variable supported in the language.

\begin{example}
\label{example:specifications}
Consider the following two (English translations of) traffic rules from the Regulations for Road Traffic Safety of the People's Republic of China~\cite{China_traffic_law}.
\begin{enumerate}
    \item \emph{Article \#38-(3): When a red light is on, vehicles are prohibited from passing.
    However, vehicles turning right can pass without hindering the passage of vehicles or pedestrians.}
    \item \emph{Article \#58-(3): When a vehicle is driving on a foggy day, the fog lights and hazard warning flashing should be on.}
\end{enumerate}
The above can be formalised as follows.
\begin{align*}
    law38_{3} & \equiv \Box ((TL(color) = red \\
     & \land (D(stopline) < 2 \lor D(junction) < 2) \\
     & \land \lnot direction = right) \to (\Diamond_{[0,3]}(speed < 0.5)) \\
     & \land (TL(color) = red  \land (D(stopline) < 2 \\
     & \lor D(junction) < 2)  \land direction = right \\
     & \land \lnot PriorityV(20)  \land \lnot PriorityP(20))\\
     & \to (\Diamond_{[0,2]}(speed > 0.5))) \\
     law58_{3} & \equiv \Box (fog \geq 0.5 \to (fogLight \land warningFlash)) 
\end{align*}
where $\mathtt{speed}$, $\mathtt{direction}$, $\mathtt{fogLight}$, and $\mathtt{warningFlash}$ represent the speed, direction, fog light status, and warning flash light status of the vehicle; $\mathtt{TL()}$ returns the status of traffic light ahead; $\mathtt{D(object)}$ calculates the distance from the vehicle to the $\mathtt{object}$ ahead; and $\mathtt{PriorityV(n)}$, $\mathtt{PriorityP(n)}$ check whether there is a priority vehicle or pedestrian within $\mathtt{n}$ meters ahead.
Note that several configurable constants (e.g., the distance $\mathtt{2}$ and the time interval $[0,3]$) are introduced to reduce the vagueness of the law in practice~\cite{Sun-Poskitt-et_al22a}. 
    \qed
\end{example}
A specification is evaluated with respect to a trace $\pi$ of \emph{scenes}, denoted as $\pi=\langle \pi_0, \pi_1, \pi_2 \ldots, \pi_n \rangle$, where each scene $\pi_i$ is a valuation of the propositions at time step $i$ and $\pi_0$ reflects the state at the start of a simulation.
These traces can be constructed from the planned trajectory generated by the ADS (Section~\ref{sec:constrct_trace}). 
We follow the standard semantics of STL (see e.g., \cite{maler2004monitoring}).

\begin{table}[t]
\setlength{\abovecaptionskip}{0.cm}
	\centering
	\caption{\small Car and environment related variables}
	\label{tab:planning_trajectory_variables}
	\footnotesize
    \begin{tabular}{c|c|p{1.55in}}
         \textbf{Signal} & \textbf{Type} & \textbf{Remarks} \\
         \hline
        $\mathtt{speed}$ & Num & Speed of ego vehicle ($m/s$). \\
         $\mathtt{acc}$ & Num & Acceleration of ego veh ($m/s^2$). \\
        $\mathtt{direction}$ & Enum & $\mathtt{forward}$, $\mathtt{left}$, $\mathtt{right}$ \\

         

        $\mathtt{D(stopline)}$ & Num & distance to the stopline ahead \\
        $\mathtt{D(junction)}$ & Num & distance to the junction ahead \\
         
        $\mathtt{fogLight}$ & Bool & whether the fog light is on \\
        $\mathtt{warningFlash}$ & Bool & whether the warning flash light is on \\


         

         $\mathtt{PriorityV(n)}$ & Bool & Whether there are vehicles with priority within n meters\\
         $\mathtt{PriorityP(n)}$ & Bool & Whether there are pedestrians with priority within n meters\\
         
         
         
        $\mathtt{TL(color)}$ & Enum & $\mathtt{YELLOW}$, $\mathtt{GREEN}$, $\mathtt{RED}$, or $\mathtt{BLACK}$ \\
         $\mathtt{TL(blink)}$ & Bool & if the traffic light ahead is blinking \\
         

        $\mathtt{fog}$ & Num &  degree of fog ranging from 0 to 1 \\
         $\mathtt{snow}$ & Num & degree of snow ranging from 0 to 1 \\
    \end{tabular}
\end{table}

\subsection{The Runtime Enforcement Problem for AVs}
Given an ADS and a user-specified property $\varphi$, our goal is to solve the \emph{runtime enforcement} problem for AVs by monitoring traces $\pi$ of the ADS against $\varphi$ at runtime, and altering its behavior when a violation is likely in the near future. Here, altering the ADS's behavior means adjusting its planned trajectory and consequently the control commands. Solving this problem could systematically improve the safety of ADSs when encountering unusual situations on the road.
We formulate our problem as follows:
\begin{definition}[Problem Definition]
\label{def:problem_def}
    Given a runtime planned trajectory $\gamma$, runtime control commands $\zeta$, a specification of ADS behavior $\varphi$, and a trace $\pi$ of the AV in a scenario.
    Let $\gamma'$, $\zeta'$, and $\pi'$ denote the adjusted planned trajectory, modified control commands, and resulting trace of the AV after these adjustments.
    Our problem is: 
    \begin{align*}
        Maximise: &~\frac{\rho(\varphi, \pi') - \rho(\varphi, \pi)}{|\gamma'- \gamma| + |\zeta'- \zeta|}. 
    \end{align*} \qed
\end{definition}
Intuitively, we seek to maximize the improvement in adhering to the desired behavior, while considering the magnitude of changes made to the planned trajectory and control commands. Here, the function $\rho$ serves as the quantitative semantics of a trace concerning the specification. Its purpose is to provide a numerical assessment that calculates the distance to a violation of the specification.

\section{Our Approach}
\label{sec:our_approach}
\label{sec:Specification Definition and Robustness Calculation}

\coolname, our runtime enforcement approach, consists of three broad steps.
First, \emph{plan validation}, in which it evaluates the planned trajectory against the specification to determine if there is a risk of violation.
Second, \emph{trajectory repair}, in which the planned trajectory is modified so as to avoid the violation.
Finally, \emph{control validation}, in which the commands generated by the control module are further evaluated to ensure the specification is satisfied.
As shown in Figure~\ref{fig:workflow}, these steps seamlessly integrate into the modular design of ADSs: \coolname sits between the modules, intercepting and altering the messages they exchange. Note that we assume that the sensor data received by the ADS is accurate.

\subsection{Plan Validation} \label{sec:constrct_trace}

Given a specification $\varphi$ and a planned trajectory from the motion planning module of the ADS, \coolname first determines whether the trajectory is likely to violate $\varphi$.
To achieve this, \coolname first constructs a trace $\pi$ from the planned trajectory, i.e., by evaluating all variables and functions relevant to $\varphi$ at every time point with respect to the planned trajectory and the predicted environment.
For instance, given the planned trajectory in Table~\ref{tab:example_of_the_planned_trajectory} (and the predicted environment in Table~\ref{tab:example_of_the_predicted_environment}),
Table~\ref{tab:example_of_signals} shows the constructed trace.

One practical complication is that some variables relevant to $\varphi$ cannot be obtained from the planned trajectory as they are only known after command generation (see Section~\ref{sec:control_evaluation}).
For example, the values of $\mathtt{fogLight}$ (i.e., whether the fog light is on) 
and $\mathtt{warningFlash}$ can only be determined once the respective commands are generated.
For such situations, we use typed `placeholder' variables $\mathtt{x}_{i,j}$ in the scenes of the trace for each time step $i$ and position $j$.
We define an \emph{assignment} $\alpha$ to be a function mapping the typed variables $\mathtt{x}_{i,j}$ to the value domains.
Then, for traces $\pi$ containing those variables, $\pi$ satisfies $\varphi$ if and only if there exists an assignment $\alpha$ such that $\pi[\alpha(\mathtt{x}_{i,j})/\mathtt{x}_{i,j}]$ satisfies $\varphi$ for every variable $\mathtt{x}_{i,j}$ in $\pi$. Practically, finding a suitable assignment $\alpha$ is straightforward: all variables for assignment $\alpha$ have only a few possible discrete values (e.g., the light is on or off), and thus brute force search is sufficient and inexpensive.

Next, \coolname computes how `close' the ego vehicle will come to violating $\varphi$. Note that our goal is to proactively react when a violation is likely in the near future. This is because the ego vehicle operates in an open environment (e.g., with other vehicles and pedestrians) and thus reacting too late may be too risky if the predicted environment turns out to be wrong (e.g., a sudden move of a pedestrian).  
To measure how close a trace $\pi$ is to violating $\varphi$, we adopt a quantitative semantics~\cite{maler2004monitoring, deshmukh2017robust,nivckovic2020rtamt} that produces a numerical \emph{robustness} degree. 

\begin{definition}[Quantitative Semantics]\label{def:Quantitative_Semantics}
Given a trace $\pi$ and a formula $\varphi$, the quantitative semantics is defined as the robustness degree $\rho(\varphi, \pi,t)$, computed as follows.
Recall that propositions $\mu$ are of the form $f(x_0,x_1,\cdots,x_k) \sim 0$.
\begin{equation*}
  \rho(\mu, \pi, t) =
    \begin{cases}
      -\pi_t(f(x_0,x_1,\cdots, x_k)) & \text{if $\sim$ is $\leq$ or $<$}\\
      \pi_t(f(x_0,x_1,\cdots, x_k)) & \text{if $\sim$ is $\geq$ or $>$}\\
      \mid \pi_t(f(x_0,x_1,\cdots, x_k)) \mid & \text{if $\sim$ is $\neq$}\\
      -\mid \pi_t(f(x_0,x_1,\cdots, x_k)) \mid & \text{if $\sim$ is $=$}
    \end{cases}       
\end{equation*}
where $t$ is the time step and $\pi_t(e)$ is the valuation of expression $e$ at time $t$ in $\pi$.
\begin{align*}
\rho(\neg\varphi,\pi,t) & = -\rho(\varphi,\pi,t) \\
\rho(\varphi_1 \land \varphi_2,\pi,t) & = \min\{\rho(\varphi_1,\pi,t),\rho(\varphi_2,\pi,t)\} \\
\rho(\varphi_1 \lor \varphi_2,\pi,t) & = \max\{\rho(\varphi_1,\pi,t),\rho(\varphi_2,\pi,t)\} \\
\rho(\varphi_1 \;\mathtt{U_I}\; \varphi_2,\pi,t) & = \sup_{t_1 \in t+\mathtt{I}} \min \{\rho(\varphi_2,\pi,t_1), \inf_{t_2 \in [t,t_1]} \rho(\varphi_1,\pi,t_2)\}
\end{align*}
where $t+I$ is the interval $[l+t,u+t]$ given $I=[l,u]$.
\qed
\end{definition}
Note that the smaller $\rho(\varphi,\pi,t)$ is, the closer $\pi$ is to violating $\varphi$.
If $\rho(\varphi,\pi,t) \leq 0$, $\varphi$ is violated.
We write $\rho(\varphi,\pi)$ to denote $\rho(\varphi,\pi,0)$; $\pi \vDash \varphi$ to denote $\rho(\varphi,\pi,t) > 0$; and $\pi \not \vDash \varphi$ to denote $\rho(\varphi,\pi,t) \leq 0$. Note that time is discrete in our setting. 

\begin{table}[t]
\setlength{\abovecaptionskip}{0.cm}
	\centering
	\caption{Trace obtained from the trajectory in Table~\ref{tab:example_of_the_planned_trajectory}}
	\label{tab:example_of_signals}

    \begin{tabular}{|l|l|c|c|c|c|c|} \hline
         \multicolumn{2}{|c|}{ planning signals} & 0 & 2 & 4 & 6 & 8  \\
         \hline

         \multicolumn{2}{|c|}{$\mathtt{speed}$} & 7.01 & 6.13 & 5.44 & 5.09 & 3.89 \\
         \multicolumn{2}{|c|}{$\mathtt{direction}$} & 0 & 0 & 0 & 0 & 0 \\

         \multicolumn{2}{|c|}{$\mathtt{D(stopline)}$} & 44 & 30.66 & 19.17 & 8.15 & -0.75 \\
         \multicolumn{2}{|c|}{$\mathtt{D(junction)}$} & 44 & 30.66 & 19.17 & 8.15 & -0.75 \\

         \multicolumn{2}{|c|}{$\mathtt{fogLight}$} & $\mathtt{x}_{0,0}$ & $\mathtt{x}_{2,0}$ & $\mathtt{x}_{4,0}$ & $\mathtt{x}_{6,0}$ & $\mathtt{x}_{8,0}$ \\ 

         \multicolumn{2}{|c|}{$\mathtt{warningFlash}$} & $\mathtt{x}_{0,1}$ & $\mathtt{x}_{2,1}$ & $\mathtt{x}_{4,1}$ & $\mathtt{x}_{6,1}$ & $\mathtt{x}_{8,1}$ \\ \hline

         \multicolumn{2}{|c|}{Prediction Signals} & 0 & 2 & 4 & 6 &  8\\
         \hline

         \multicolumn{2}{|c|}{$\mathtt{TL(color)}$} & 1 & 0 & 0 & 0 & 2 \\

         \multicolumn{2}{|c|}{$\mathtt{fog}$} & 0.6 & 0.6 & 0.6 & 0.6 & 0.6 \\
         \multicolumn{2}{|c|}{$\mathtt{PriorityV(20)}$} & $\mathtt{false}$ & $\mathtt{false}$ & $\mathtt{false}$ & $\mathtt{false}$ & $\mathtt{false}$ \\
         \multicolumn{2}{|c|}{$\mathtt{PriorityP(10)}$} & $\mathtt{false}$ & $\mathtt{false}$ & $\mathtt{false}$ & $\mathtt{true}$ & $\mathtt{true}$ \\ \hline
    \end{tabular}
\vspace{-0.5cm}
\end{table}

\begin{example}
\label{example:robustness calculation}
Let $\varphi = \Box (speed < 90)$, i.e., the speed limit is $90$km/h.
Suppose $\pi$ is $\langle (speed \mapsto 0, \dots), (speed \mapsto 0.5, \dots), \cdots \\ (speed \mapsto 85, \dots) \rangle$ where the ego vehicle's max $speed$ is $85$km/h at the last time step.
We have $\rho(\varphi, \pi) = \rho(\varphi, \pi, 0) = min_{t \in [0, |\pi|]} ( 90 - \pi_t(speed) ) = 5$.
Suppose instead that $\varphi$ is the specification from Example~\ref{example:specifications} and $\pi$ is the trace from Table~\ref{tab:example_of_signals}. The robustness value is $\rho(\varphi,\pi) = 0$, i.e., $\varphi$ is violated as the ego vehicle fails to stop before the stop line when the traffic light turns red.
\qed
\end{example}

\subsection{Trajectory Repair}
\label{sec:PlanRepair}

If the robustness value of $\varphi$ with respect to a trace $\pi$ is below a certain threshold $\theta$, there is a risk of violating $\varphi$ in the future, even if $0 < \rho(\varphi,\pi) \leq \theta$ (given that there is uncertainty in the predicted environment).
This threshold is determined experimentally in our work (Section~\ref{sec:implementation_evaluation}): intuitively, it characterises how `cautious' the ADS is.
In order to enforce $\varphi$, i.e., proactively prevent its possible violation, \coolname \emph{repairs} the planned trajectory before sending it to the control module of the ADS so as to change the commands that will be generated.

Our trajectory repair method consists of three steps.
First, we identify the earliest time step when the robustness value falls below the threshold.
Second, we compute the gradient (through auto-differentiation~\cite{griewank1989automatic}) of each variable at the identified time step with respect to the robustness degree.
Based on the result, we then modify the variable to increase the robustness degree.
Finally, we modify the planned trajectory accordingly and feed it into the control module. In the following, we present each step in detail. \\

\noindent \textbf{\emph{Determine the time step.}}
Given a trace $\pi = \langle (t_0, s_0), \cdots, (t_n, s_n) \rangle$, we write $\pi^k$ to denote the prefix $ \langle (t_0, s_0), (t_1, s_1), \cdots, (t_k, s_k) \rangle $. 
Given $\pi$ such that $\rho(\varphi, \pi) < \theta$, we aim to identify a time step $k$ such that: (1) $\rho(\varphi, \pi^k) < \theta$; and (2) there does not exist a time step $l$ such that $l < k$ and $\rho(\varphi, \pi^l) < \theta$. 
Intuitively, $k$ is the earliest time step when the robustness value falls below the threshold. 
We identify the time step $k$ using a sequential search, i.e., we start from $k=0$ and keep increasing $k$ until we find a $k$ such that $\rho(\varphi, \pi^k) < \theta$.

\begin{example}
\label{example:timestep location}
Let $\varphi = law38_{3}$ from Example~\ref{example:specifications} and $\pi$ denote the trace from Table~\ref{tab:example_of_signals}.
Suppose the threshold $\theta$ is 10.
Then, as shown in Example~\ref{example:robustness calculation}, $\rho(law38_{3}, \pi) = 0$ and is thus below the threshold. Then, we apply the above-mentioned algorithm to identify the time step. 
The following are computed in sequence. 
\begin{align*}
    \rho(\varphi, \pi^{0}) = 42, \cdots, \rho(\varphi, \pi^2)  = 28.66, \cdots,\\
    \rho(\varphi, \pi^{4}) = 17.17, \cdots, \rho(\varphi, \pi^{6}) = 6.15 
\end{align*}
Thus, the time step $k$ that we are looking for is $6$ (as $6 < \theta$). 
\qed
\end{example}

\noindent \textbf{\emph{Calculate the gradient.}}
Next, we find out how the variables at time step $k$ should be modified so that the robustness degree of the resulting trace can be improved.
We thus define a differentiation function that calculates the gradient of each relevant variable with respect to the robustness degree.
Intuitively, when the gradient of a variable $x$ at time $k$ is positive (resp.~negative), we can increase the robustness degree by increasing (resp.~decreasing) the value of $x$. 

Recall that the robustness degree of $\varphi$ is computed using discrete functions $min$ and $max$ (Definition~\ref{def:Quantitative_Semantics}) that are hard to differentiate~\cite{wen2012using}.
Hence, we adopt a continuous robustness measure as defined in~\cite{pant2017smooth, gilpin2020smooth} which replaces $min$ and $max$ in Definition~\ref{def:Quantitative_Semantics} with continuous functions $\widetilde{max}$ and $\widetilde{min}$ as follows:
\begin{align*}
    \widetilde{max}\{x_0, x_1, \dots, x_m\} & = \frac{1}{a} \ln ( \sum_{i=1}^{m} e^{a x_i}) \\
    \widetilde{min}\{x_0, x_1, \dots, x_m\} & = - \widetilde{max}(-x_0, -x_1, \dots, -x_m)
\end{align*}
where $a$ is a constant that controls the accuracy of $\widetilde{max}$ and $\widetilde{min}$. The larger $a$ is, the closer $\widetilde{max}$ (resp.~$\widetilde{min}$) is to $max$ (resp.~$min$).
We set $a$ to be 10, following~\cite{pant2017smooth, gilpin2020smooth}.
We denote the continuous robustness degree as $\tilde{\rho}(\varphi, \pi)$. The following proposition from~\cite{pant2017smooth, gilpin2020smooth} establishes the soundness of approximating $\rho(\varphi, \pi)$ with $\tilde{\rho}(\varphi, \pi)$. 
\begin{proposition}
Let $\varphi$ be an STL formula, $\pi$ be a trace, and $\varepsilon$ be a real value larger than 0. Then, there exists a value $a_1$ such that $|\Tilde{\rho}(\varphi, \pi, i) - \rho(\varphi, \pi, i)| < \varepsilon$ holds for all $a > a_1$.
\hfill \qed
\end{proposition}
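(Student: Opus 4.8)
The plan is to prove the bound by \emph{structural induction} on the formula $\varphi$. The two ingredients I would set up first are (i) a tight elementary bound showing how well the log-sum-exp operator approximates $\max$ (and, dually, $\min$), and (ii) the fact that $\max$, $\min$, $\widetilde{\max}$, and $\widetilde{\min}$ are all $1$-Lipschitz in the supremum norm of their arguments. Together these let me separate ``the error introduced by smoothing a single operator'' from ``the error inherited from the operator's arguments,'' which is exactly what an induction over the syntax tree needs.

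First I would establish the base approximation bound. For any finite multiset $\{x_1,\dots,x_m\}$ we have $e^{a\,\max_i x_i}\le \sum_{i=1}^{m} e^{a x_i}\le m\,e^{a\,\max_i x_i}$, so taking logarithms and dividing by $a>0$ gives
\[
\max_i x_i \;\le\; \widetilde{\max}\{x_1,\dots,x_m\} \;\le\; \max_i x_i + \tfrac{\ln m}{a},
\]
and dually $\min_i x_i - \tfrac{\ln m}{a} \le \widetilde{\min}\{x_1,\dots,x_m\} \le \min_i x_i$. Hence one smoothed operator over $m$ arguments differs from its crisp counterpart by at most $\tfrac{\ln m}{a}$, which tends to $0$ as $a\to\infty$. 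For the Lipschitz step I would note that the gradient of $\widetilde{\max}$ is the softmax vector, a probability distribution, so $\lvert \widetilde{\max}(\mathbf{x})-\widetilde{\max}(\mathbf{y})\rvert \le \max_i \lvert x_i-y_i\rvert$; the same holds for $\max$, $\min$, and $\widetilde{\min}$.

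With these in hand, the induction proceeds cleanly. The base case is an atomic proposition $\mu$, where $\rho$ and $\tilde{\rho}$ coincide exactly (no $\min$/$\max$ appears), so the error is $0$. Negation merely flips the sign and preserves the error. For $\varphi_1\land\varphi_2$, $\varphi_1\lor\varphi_2$, and the nested $\sup$–$\min$–$\inf$ of $\varphi_1\,\mathtt{U_I}\,\varphi_2$, I apply the triangle inequality at the outermost operator: its total error is bounded by the \emph{smoothing error} (at most $\tfrac{\ln N}{a}$ by the base bound, where $N$ is the finite number of arguments) plus the \emph{propagated error} of its arguments, which the Lipschitz property bounds by the maximum of the subformulas' errors. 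Because time is discrete, every $\sup$ and $\inf$ in the until clause ranges over finitely many steps and is therefore a genuine finite $\max$/$\min$, so the base bound applies verbatim. Summing over the finite syntax tree of $\varphi$ yields an overall bound of the form $\lvert \tilde{\rho}(\varphi,\pi,i)-\rho(\varphi,\pi,i)\rvert \le \tfrac{c(\varphi)}{a}$, where $c(\varphi)$ depends only on the number and arity of the operators in $\varphi$; choosing $a_1 = c(\varphi)/\varepsilon$ then gives the claim for all $a>a_1$.

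The hard part will be tracking the error propagation through the until operator, whose robustness nests a $\sup$ of $\min$s of $\inf$s across three layers. I must apply the Lipschitz bound at each layer and verify that the accumulated smoothing terms remain a finite sum of $O(1/a)$ contributions. The key realization that makes this manageable is precisely the discreteness of time: it collapses each $\sup$/$\inf$ into a finite $\max$/$\min$ of bounded arity, so no term contributes an unbounded $\ln N$, and the total error stays $O(1/a)$.
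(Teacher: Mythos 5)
Your proof is correct, but note that the paper itself contains no proof of this proposition: it is stated as a result imported from the cited works on smooth STL robustness~\cite{pant2017smooth, gilpin2020smooth}, which is why it appears with no accompanying argument. Your structural induction reconstructs, self-containedly, essentially the argument underlying those references: the log-sum-exp sandwich
\[
\max_i x_i \;\le\; \widetilde{\max}\{x_1,\dots,x_m\} \;\le\; \max_i x_i + \frac{\ln m}{a}
\]
(and its dual for $\widetilde{\min}$) bounds the error injected at each smoothed operator; the gradient of $\widetilde{\max}$ being a softmax (probability) vector makes it $1$-Lipschitz in the sup norm, so errors inherited from subformulas propagate without amplification; the triangle inequality combines the two contributions; and discreteness plus finiteness of the trace turns every $\sup$/$\inf$ in the until semantics into a finite $\max$/$\min$, so each layer contributes $O(1/a)$. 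Compared with the paper's citation, your argument buys an explicit convergence rate $|\tilde{\rho}(\varphi,\pi,i)-\rho(\varphi,\pi,i)|\le c/a$ and hence the concrete choice $a_1=c/\varepsilon$, at the cost of length. One small correction is needed in your wording: $c$ cannot depend \emph{only} on the number and arity of the operators in $\varphi$ --- for temporal operators the arity of the instantiated $\max$/$\min$ is determined by the interval $I$ and by the trace length (an unbounded until or $\Box$ ranges over all remaining steps of $\pi$), so the constant is really $c(\varphi,\pi)$. This is harmless for the statement being proved, since the proposition fixes both $\varphi$ and $\pi$ before asking for $a_1$, but the dependence on $\pi$ should be made explicit.
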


Next, we define a differentiation function $D(\varphi, \pi, x^k)$ that returns a given variable $x$'s gradient with respect to $\rho (\varphi, \pi)$ at time $k$.
\begin{align*}
D(\varphi, \pi, x^k) = \frac{\partial \tilde{\rho}(\varphi, \pi, 0)}{\partial x^k}
\end{align*}
The following shows how $D(\varphi, \pi, x^k)$ is computed.

\begin{definition} 
\label{def:Gradient}
Given an STL formula $\varphi$ and trace $\pi$, function $D(\varphi, \pi, x^k)$ is defined as follows:
\begin{equation*}
  \frac{\partial \tilde{\rho}\left( \mu, \pi, t\right)}{\partial x^k} =
    \begin{cases}
      0 & \text{if $k \neq t$}\\
      \frac{d f'(x_0,x_1,\cdots,x_n)}{d  x^k} & \text{otherwise}\\
    \end{cases}       
\end{equation*}
where $\frac{d f'(x_0,x_1,\cdots,x_k)}{d x^k}$ is the derivative of function $f'$ with respect to $x^k$. Furthermore, let $\frac{\partial \widetilde{max} (\{x_0, x_1, \dots, x_m \}) }{\partial x^k}$ be defined as $\frac{e^{a x}}{\sum_{i=1}^{m} e^{a x_i}}$, and 
    let $\frac{\partial \widetilde{min} (\{x_0, x_1, \dots, x_m\})}{\partial x^k}$ be defined as $\frac{e^{-a x}}{\sum_{i=1}^{m} e^{-a x_i}}$.
{\small
\begin{align*}
\frac{\partial \tilde{\rho}\left( \lnot \varphi, \pi, t\right)}{\partial x^k} & = 
    -\frac{\partial \tilde{\rho}\left(\varphi, \pi, t\right)}{\partial x^k}\\
\frac{\partial \tilde{\rho}\left( \varphi_1 \land \varphi_2, \pi, t\right)}{\partial x^k} & = 
    \frac{\partial \widetilde{min} \{  \tilde{\rho}\left(\varphi_1, \pi, t \right) , \tilde{\rho}\left(\varphi_2, \pi, t \right)  \} }{\partial \tilde{\rho}\left( \varphi_1, \pi, t \right)} \cdot \frac{\partial \tilde{\rho}\left(\varphi_1, \pi, t \right)}{\partial x^k} \\
    &+ \frac{\partial \widetilde{min} \{  \tilde{\rho}\left(\varphi_1, \pi, t \right) , \tilde{\rho}\left(\varphi_2, \pi, t \right)  \} }{\partial \tilde{\rho}\left( \varphi_2, \pi, t\right)} \cdot \frac{\partial \tilde{\rho}\left(\varphi_2, \pi, t\right)}{\partial x^k}\\
\frac{\partial \tilde{\rho}\left( \varphi_1 \lor \varphi_2, \pi, t\right)}{\partial x^k} & = 
    \frac{\partial \widetilde{max} \{  \tilde{\rho}\left(\varphi_1, \pi, t \right) , \tilde{\rho}\left(\varphi_2, \pi, t \right)  \} }{\partial \tilde{\rho}\left( \varphi_1, \pi, t \right)} \cdot \frac{\partial \tilde{\rho}\left(\varphi_1, \pi, t \right)}{\partial x^k} \\
    &+ \frac{\partial \widetilde{max} \{  \tilde{\rho}\left(\varphi_1, \pi, t \right) , \tilde{\rho}\left(\varphi_2, \pi, t \right)  \} }{\partial \tilde{\rho}\left( \varphi_2, \pi, t\right)} \cdot \frac{\partial \tilde{\rho}\left(\varphi_2, \pi, t\right)}{\partial x^k}\\
\frac{\partial \tilde{\rho}\left( \varphi_1 \;\mathtt{U_I}\; \varphi_2, \pi, t\right)}{\partial x^k} & = 
    \sum_{t'\in t+\mathtt{I}} 
    \left( 
    \frac{\partial \tilde{\rho}\left( \varphi_1 \;\mathtt{U_I}\; \varphi_2, \pi, t \right)}{\partial  \tilde{\rho}\left( \varphi_1, \pi, t' \right)} 
    \cdot \frac{\partial \tilde{\rho}\left( \varphi_1, \pi, t'\right)}{\partial  x^k} 
    \right. \\
    & \left. + \frac{\partial \tilde{\rho}\left( \varphi_1 \;\mathtt{U_I}\; \varphi_2, \pi, t \right)}{\partial  \tilde{\rho}\left( \varphi_2, \pi, t' \right)} 
    \cdot \frac{\partial \tilde{\rho}\left( \varphi_2, \pi, t'\right)}{\partial  x^k} \right)
\end{align*}
}
where $\frac{\partial \tilde{\rho}\left( \varphi_1 \;\mathtt{U_I}\; \varphi_2, \pi, t \right)}{\partial  \tilde{\rho}\left( \varphi_1, \pi, t' \right)}$ is the derivative of $\tilde{\rho}\left( \varphi_1 \mathtt{U_I} \varphi_2, \pi, t \right)$ with respect to $\tilde{\rho}\left( \varphi_1, \pi, t' \right)$, and is defined as:
{\tiny
\begin{align*}    
    \sum_{t_1\in t+\mathtt{I} \land t_1 \geq t'} 
    \left( \frac{\partial \widetilde{max} \{ \widetilde{min} \{ \tilde{\rho}( \varphi_2,\pi,t_1), \inf_{t_2 \in [t,t_1]} \tilde{\rho}(\varphi_1,\pi,t_2)\} ~| t_1\in t+\mathtt{I} \} }{\partial \widetilde{min} \{\tilde{\rho}(\varphi_2,\pi,t_1), \inf_{t_2 \in [t,t_1]} \tilde{\rho}(\varphi_1, \pi,t_2)\} } \cdot   \right.  \\
     \left. \frac{\partial \widetilde{min} \{  \tilde{\rho}(\varphi_2,\pi,t_1), \inf_{t_2 \in [t,t_1]} \tilde{\rho}(\varphi_1,\pi,t_2) \} }{\partial \inf_{t_2 \in [t,t_1]} \tilde{\rho}(\varphi_1,\pi,t_2)  }
    \cdot 
    \frac{\partial \widetilde{min} \{\tilde{\rho}(\varphi_1,\pi,t_2) ~|~ t_2 \in [t,t_1]\} }{\partial \tilde{\rho}(\varphi_1,\pi,t')  } \right)
\end{align*}
}
where $\frac{\partial \tilde{\rho}\left( \varphi_1 \;\mathtt{U_I}\; \varphi_2, \pi, t \right)}{\partial  \tilde{\rho}\left( \varphi_2, \pi, t' \right)}$ is defined as:
{\small
\begin{align*}
    \frac{\partial \widetilde{max} \{ \widetilde{min} \{\tilde{\rho}(\varphi_2,\pi,t_1), \inf_{t_2 \in [t,t_1]} \tilde{\rho}(\varphi_1,\pi,t_2)\} ~|~ t_1\in t+\mathtt{I} \} }{\partial \widetilde{min} \{\tilde{\rho}(\varphi_2,\pi,t'), \inf_{t_2 \in [t,t']} \tilde{\rho}(\varphi_1,\pi,t_2)\} } \cdot  &\\
     \frac{\partial \widetilde{min} \{  \tilde{\rho}(\varphi_2,\pi,t'), \inf_{t_2 \in [t,t']} \tilde{\rho}(\varphi_1,\pi,t_2) \} }{\partial \tilde{\rho}(\varphi_2,\pi,t')  } &
\end{align*}  \qed
}
   
\end{definition}

Given the time step $k$ previously identified, we apply the above definition to compute $D(\varphi, \pi^k, x^k)$ for every variable $x$.
The purpose of the differentiation function D is to determine the “responsibility” of each signal in violating the specification. In other words, consider the computation of robustness as a function of multiple variables where D determines the gradient of each variable.
We remark that our implementation of $D(\varphi, \pi^k, x^k)$ is based on automatic differentiation techniques~\cite{griewank1989automatic}.
Intuitively, we store the intermediate values while computing the robustness degree, and then compute the gradients based on reverse accumulation.

\begin{example}
\label{example:gradient calculation}
Given the trace $\pi$ of Table~\ref{tab:example_of_signals}, the following shows how to calculate the gradient of $speed$ with respect to $\varphi_0 = \Box (speed > 5)$ at time step $6$:
\small{
\begin{align*}
    & D(\varphi_0, \pi^6, speed^6) = \frac{\partial \rho(\varphi, \pi^6, 0)}{\partial \rho(speed > 5, \pi^6, 6)} \cdot \frac{\partial \rho(speed > 5, \pi^6, 6)}{\partial speed^6} \\
    & = \frac{e^{-10 * 0.09}}{e^{-10 * 2.01} + e^{-10 * 1.13} + e^{-10 * 0.44} + e^{-10 * 0.09}} \cdot 1 = 0.97
\end{align*}
}
Similarly, continuing Example~\ref{example:timestep location}, the gradients are computed as follows: 
{\small
\begin{align*}
D(law38_{3}, \pi^6, speed^6) &=  8.39\times10^{-08} \\
D(law38_{3}, \pi^6, D(stopline)^6) &=  0.5 \\
D(law38_{3}, \pi^6, D(junction)^6) &= 0.5 \\
D(law38_{3}, \pi^6, direction^6) &= -4.74\times10^{-19} \\
D(law38_{3}, \pi^6, TL(color)^6) &= -9.48\times10^{-19} \\
D(law38_{3}, \pi^6, PriorityV(20)^6) &= -9.77\times10^{-28} \\
D(law38_{3}, \pi^6, PriorityN(20)^6) &= 2.15\times10^{-23}
\end{align*}
}
The gradients for variable $D(stopline)$ and $D(junction)$ at time step 6 are positive, which means that we can effectively increase the robustness value $\rho ( law38_{3}, \pi^6)$ by increasing $D(stopline)^6$ or $D(junction)^6$.
\qed
\end{example}

\begin{proposition}
\label{prop:alwaysincrease}
    Let $D(\varphi, \pi, x^k)$ be the result of gradient calculation as shown in Definition~\ref{def:Gradient}. When $D(\varphi, \pi, x^k)$ is positive (or negative), there exists an interval $(0, \Delta)$ such that increasing (or decreasing) $x^k$ within this interval increases in the value of $\tilde{\rho}(\varphi, \pi)$.
\end{proposition}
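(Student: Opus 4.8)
The plan is to reduce the statement to the elementary fact that a real function with a nonzero ordinary derivative is locally strictly monotone in the direction given by the derivative's sign. Fix the trace $\pi$ and regard $\tilde{\rho}(\varphi,\pi,0)$ as a function $g$ of the single real argument $x^k$, holding every other signal value of $\pi$ constant. By the equation preceding Definition~\ref{def:Gradient}, $D(\varphi,\pi,x^k)$ is by definition the partial derivative $\partial\tilde{\rho}(\varphi,\pi,0)/\partial x^k = g'(x^k)$, so it suffices to establish two things: (i) that $g$ is differentiable at the current value of $x^k$ with derivative equal to the recursively computed $D(\varphi,\pi,x^k)$ of Definition~\ref{def:Gradient}; and (ii) that a nonzero derivative furnishes the claimed interval.

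For part (i) I would argue by structural induction on $\varphi$, showing simultaneously that $g$ is differentiable and that each clause of Definition~\ref{def:Gradient} is exactly the chain-rule expression for the corresponding clause of the smoothed semantics. The base case is a proposition $\mu \equiv f(x_0,\dots,x_k)\sim 0$, whose smoothed robustness is $\pm f$ or $\pm|f|$; since $f$ is linear, hence $C^\infty$, in $x^k$, the $<,\leq,>,\geq$ cases are everywhere differentiable with derivative $df'/dx^k$ as stated. For the inductive step, the decisive observation is that the smoothing functions $\widetilde{max}$ and $\widetilde{min}$ are log-sum-exp expressions and are therefore $C^\infty$ everywhere, unlike the hard $\max$ and $\min$ they replace, whose kinks at ties were precisely the source of non-differentiability that motivated the smoothing. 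Consequently negation, conjunction, disjunction, and the until operator are all compositions of these smooth functions with the inductively differentiable subformula robustnesses, and the chain rule yields both differentiability and the recursive derivative formulas of Definition~\ref{def:Gradient}. Because time is discrete, the $\sup$ and $\inf$ over $t+\mathtt{I}$ and $[t,t_1]$ collapse to finite (smoothed) $\widetilde{max}$ and $\widetilde{min}$, so the temporal quantifiers introduce no additional difficulty.

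For part (ii), write $c = D(\varphi,\pi,x^k)$ and suppose $c>0$. By the definition of the derivative, taking $\varepsilon = c/2$ yields a $\Delta>0$ such that $|g(x^k+\delta)-g(x^k)-c\delta| \leq \tfrac{c}{2}|\delta|$ for all $|\delta|<\Delta$; hence for every $\delta\in(0,\Delta)$ we obtain $g(x^k+\delta)-g(x^k) \geq \tfrac{c}{2}\delta > 0$, i.e.\ increasing $x^k$ by any amount in $(0,\Delta)$ strictly increases $\tilde{\rho}(\varphi,\pi)$. The case $c<0$ is symmetric, applying the same bound to $\delta\in(-\Delta,0)$, which corresponds to decreasing $x^k$.

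The one point that genuinely needs care, and which I expect to be the main obstacle, is the base case for the predicates $=$ and $\neq$, whose smoothed robustness $\mp|f|$ is not differentiable at points where $f=0$. I would dispatch this by observing that at such a point $df'/dx^k$, and therefore $D$, is undefined rather than strictly positive or negative, so the proposition's hypothesis does not apply there; everywhere else $|f|$ is locally smooth and the induction goes through. Alternatively, one can read $(0,\Delta)$ as a one-sided neighbourhood and replace the two-sided derivative throughout by the one-sided directional derivative, which always exists for $\mp|f|$, after which the Taylor estimate of part (ii) applies verbatim with the directional derivative in place of $c$.
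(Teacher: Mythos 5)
Your proof is correct, but it takes a genuinely different route from the paper's. The paper proves Proposition~\ref{prop:alwaysincrease} by structural induction on $\varphi$, propagating the interval-existence claim itself through every connective: the base case is dismissed with the remark that $f'$ is ``linear or absolute value'', negation is handled by sign-flipping, disjunction by a case analysis on the softmax weights $e^{a x_i}/\sum_j e^{a x_j}$ (including an informal $a \to \infty$ limit, even though $a$ is fixed at $10$ in the smoothed semantics), conjunction via the De~Morgan identity $\varphi_1 \land \varphi_2 = \lnot(\lnot\varphi_1 \lor \lnot\varphi_2)$, and the until operator by simply asserting that it is ``a combination of $\widetilde{max}$ and $\widetilde{min}$''. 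You instead factor the argument into two independent parts: (i) the recursive clauses of Definition~\ref{def:Gradient} are exactly the chain rule for the smoothed semantics, so $D(\varphi,\pi,x^k)$ coincides with the ordinary derivative of $x^k \mapsto \tilde{\rho}(\varphi,\pi)$ --- a structural induction whose only analytic input is that log-sum-exp is $C^\infty$ and that discrete time collapses $\sup/\inf$ to finite smoothed maxima and minima; and (ii) a single application of the $\varepsilon$--$\delta$ definition of the derivative, with $\varepsilon = c/2$, which yields the required interval $(0,\Delta)$ uniformly for all $\varphi$ at once. Your decomposition buys rigor exactly where the paper is loosest: the until case follows from the same chain-rule argument instead of being asserted, no limiting argument in $a$ is needed, and you are the only one to notice (and repair, via vacuity of the hypothesis or one-sided derivatives) that the smoothed robustness of $=$ and $\neq$ atoms is $\mp|f|$, which fails to be differentiable at $f = 0$ --- a gap the paper's base case silently skips. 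What the paper's connective-by-connective argument offers in exchange is operational intuition about how the softmax weights apportion responsibility among subformulas, and explicit interval bookkeeping such as $\Delta' < \min\{\Delta_1, \Delta_2\}$, but these are special cases of the elementary calculus fact you invoke once.
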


\begin{proof}
First, if $\varphi$ is a Boolean Expression $\mu$, $\tilde \rho(\mu, \pi)$ can be represented by a continuous function $f'(x_0,\cdots,x_n)$ as shown in Definition~\ref{def:Quantitative_Semantics}. Given that $f'(x_0,\cdots,x_n)$ is confined to linear or absolute value functions, the proposition holds for $\mu$.

Then, assuming the proposition holds, the proposition holds if we can prove the proposition holds for each and every way $\varphi$ can be constructed, i.e., $\lnot \varphi_1$, $\varphi_1 \land \varphi_2$, $\varphi_1 \lor \varphi_2$, and $\varphi_1 \;\mathtt{U_I}\; \varphi_2$.

If $\varphi$ is in the format of $\lnot \varphi_1$, we have
$
\tilde\rho(\lnot \varphi_1, \pi) = - \tilde \rho(\varphi_1, \pi), 
$
 and
$
D(\lnot \varphi_1, \pi, x^k) = -D(\varphi_1, \pi, x^k).
$ 
Thus, by negating the modification, we can ensure that the proposition holds for $\lnot \varphi_1$.

If $\varphi$ is in the format of $\varphi_1 \lor \varphi_2$, 
$
\tilde \rho(\varphi_1 \lor \varphi_2, \pi) = \frac{1}{a} \ln ( e^{a x_1} + e^{a x_2})
$, and
$D(\varphi_1 \lor \varphi_2, \pi, x^k) =
\frac{e^{a x_1}}{e^{a x_1} + e^{a x_2}} \cdot D(\varphi_1, \pi, x^k) +  
\frac{e^{a x_2}}{e^{a x_1} + e^{a x_2}} \cdot D(\varphi_2, \pi, x^k)
$. 
Here, $x_1 = \tilde \rho(\varphi_1, \pi)$, $x_2 = \tilde \rho(\varphi_2, \pi)$, $a \to \infty$. 
Suppose the proposition holds for $\tilde \rho(\varphi_1, \pi)$ within interval $(0, \Delta_1)$, and holds for $\tilde \rho(\varphi_2, \pi)$ within interval $(0, \Delta_2)$.
If $x_1 \neq x_2$, suppose $x_1 > x_2$, then we have $\frac{e^{a x_1}}{e^{a x_1} + e^{a x_2}} \to 1$, $\frac{e^{a x_2}}{e^{a x_1} + e^{a x_2}} \to 0$, and $D(\varphi_1 \lor \varphi_2, \pi, x^k) \to D(\varphi_1, \pi, x^k)$. The proposition holds for interval $(0, \Delta_1)$. 
If $x_1 = x_2$, then $D(\varphi_1 \lor \varphi_2, \pi, x^k) > 0$ indicates 
$D(\varphi_1, \pi, x^k) + D(\varphi_2, \pi, x^k) > 0$. Even if one of $D(\varphi_1, \pi, x^k)$ and $D(\varphi_2, \pi, x^k)$ is negative, the value of $e^{a x_1} + e^{a x_2}$ still increases, leading to the increase of $\tilde \rho(\varphi_1 \lor \varphi_2, \pi)$. Let $\Delta'$ be a number larger than 0 and smaller than $min\{\Delta_1, \Delta_2\}$. The proposition holds for $(0, \Delta')$.

If $\varphi$ is in the format of $\varphi_1 \land \varphi_2$, since $\varphi_1 \land \varphi_2 =\lnot( \lnot \varphi_1 \lor \lnot \varphi_2)$,
we can deduce that the proposition always holds for $\varphi_1 \land \varphi_2$.

If $\varphi$ is in the format of $\varphi_1 \;\mathtt{U_I}\; \varphi_2$.
Since $\tilde \rho(\varphi_1 \;\mathtt{U_I}\; \varphi_2,\pi)$ is a combination of the function $\widetilde{max}$ and $\widetilde{min}$, we can deduce that the proposition always holds for $\varphi_1 \;\mathtt{U_I}\; \varphi_2$.

Therefore, we can conclude that the proposition holds.
\end{proof}
Intuitively, Proposition~\ref{prop:alwaysincrease} clarifies that the gradient calculation function $D(\varphi, \pi, x^k)$ reflects the changing trend of the robustness function $\tilde\rho(\varphi, \pi)$ in terms of variable $x^k$. 
However, the changing trend is sensitive to the variable's current value. If we increase the variable by too much, it may lead to a decrease in robustness. 
For instance, consider the specification: $\varphi = 10 < speed < 100$.
Suppose the current speed is 8, then the robustness is $-2$, and the gradient for speed $D(\varphi, \pi, speed)$ is 1, indicating that we should increase the value of speed. If we increase the speed within the interval $(0, 94)$, the robustness will always be larger than $-2$. However, if we increase the speed to 103, the robustness will become $-3$, resulting in a decrease.
Therefore, to guarantee an increase in robustness, it is necessary to limit the modification within an interval of $(0, \Delta)$.

\noindent \textbf{\emph{Repair the trajectory.}}
The gradients calculated above allow us to determine how to effectively increase the robustness degree. 
We can proceed to repair the trace by modifying the variable with the maximal absolute gradient at time step $k$. 
The \emph{magnitude} of the modification is calculated as follows: 
\begin{align*}
    \delta = \frac{\theta - \tilde\rho(\varphi, \pi^k)}{D(\varphi, \pi^k, x^k)};
    ~While \; \tilde\rho(\varphi, \pi') < \tilde\rho(\varphi, \pi)~Do\!:~ \left\{\delta \leftarrow \delta/2\right\}
\end{align*}
where $\pi'$ is the trace after the modification. This \emph{magnitude} of the modification indicates that we try to increase the robustness value to $\theta$. However, this adjustment might sometimes lead to overreactions, causing a decrease in the robustness value. In such cases, we reduce $\delta$ until we observe an increase in the robustness value, and the descent rate during this process follows a scale of $2^n$, enabling us to efficiently determine the magnitude.
For instance, according to Example~\ref{example:gradient calculation}, we should modify $D(stopline)$ or $D(junction)$ at time step 6 with a magnitude of $\frac{\theta - \rho(law38_{3}, \pi^{6})}{0.5} = 7.7$. This modification results in $\rho(law38_{3}, \pi^{6})$ increasing from 6.15 to 13.85.

\begin{proposition}
\label{prop:overallincrease}
    Let $x^k$ be a variable, and $\delta$ be the \emph{magnitude} of the modification on $x^k$. The robustness value $\tilde \rho(\varphi, \pi)$ always increases after the modification. 
\end{proposition}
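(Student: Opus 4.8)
The plan is to treat the \textbf{while} loop as a safeguard and thereby reduce the claim to two facts: that the sign of $\delta$ points in the direction of increasing robustness, and that the halving loop terminates. Once termination is established, the negation of the loop guard $\tilde\rho(\varphi,\pi') < \tilde\rho(\varphi,\pi)$ immediately yields $\tilde\rho(\varphi,\pi') \geq \tilde\rho(\varphi,\pi)$, which is exactly the asserted increase. Everything rests on Proposition~\ref{prop:alwaysincrease}, which already supplies an interval $(0,\Delta)$ of admissible step sizes on which robustness strictly increases.

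First I would fix the sign of $\delta$. The repair is triggered only when $\tilde\rho(\varphi,\pi^k) < \theta$, so the numerator $\theta - \tilde\rho(\varphi,\pi^k)$ is strictly positive and hence $\mathrm{sign}(\delta) = \mathrm{sign}(D(\varphi,\pi^k,x^k))$. Since the variable we modify is the one of maximal absolute gradient and repair is attempted only when some gradient is nonzero, $\delta$ is well defined and its sign agrees with the ascent direction identified by $D$. Second, I would invoke Proposition~\ref{prop:alwaysincrease} to obtain a $\Delta>0$ such that moving $x^k$ by any signed amount of magnitude in $(0,\Delta)$ and of sign $\mathrm{sign}(D)$ strictly increases $\tilde\rho(\varphi,\cdot)$. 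Third, I would argue termination: each iteration replaces $\delta$ by $\delta/2$, so $|\delta|$ decays geometrically and after finitely many iterations satisfies $|\delta| < \Delta$; at that point the modification lies in the admissible interval, the robustness strictly increases, the guard becomes false, and the loop exits. Chaining these three steps gives the claim.

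The hard part is the mismatch between the prefix $\pi^k$ used to define $\delta$ and to compute $D(\varphi,\pi^k,x^k)$, and the full trace $\pi$ appearing both in the loop guard and in the statement. Proposition~\ref{prop:alwaysincrease} is phrased for a single trace, so to make the termination argument airtight I would apply it directly to the full trace $\pi$ in the direction $\mathrm{sign}(D(\varphi,\pi^k,x^k))$, using the loop guard itself as the operational test: the guard is precisely the condition ``robustness did not increase on $\pi$'', so regardless of how the prefix and full-trace gradients are related, halving can only stop once robustness on $\pi$ has genuinely increased. The one thing that could defeat the argument is non-termination, i.e.\ that no positive step size increases robustness on $\pi$; ruling this out is exactly where Proposition~\ref{prop:alwaysincrease} applied to $\pi$ is indispensable. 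I would therefore flag the tacit assumption that the gradient sign is consistent between $\pi^k$ and $\pi$ as the point deserving the most care, and note that under this assumption the geometric decay of $|\delta|$ forces entry into $(0,\Delta)$ and hence guarantees both termination and the final increase.
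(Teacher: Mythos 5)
Your proposal is correct and follows essentially the same route as the paper's own proof: establish that $\delta$ shares the sign of $D(\varphi,\pi,x^k)$ from the trigger condition $\theta - \tilde\rho(\varphi,\pi) > 0$, invoke Proposition~\ref{prop:alwaysincrease} for the admissible interval $(0,\Delta)$, and argue that halving forces $\lvert\delta\rvert$ below $\Delta$ so the loop terminates with non-decreased (eventually strictly increased) robustness. The prefix-versus-full-trace mismatch you flag is a real looseness that the paper's proof silently glosses over (it writes $\tilde\rho(\varphi,\pi)$ in the trigger condition while the magnitude formula uses $\pi^k$), so your treatment is, if anything, slightly more careful than the original.
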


\begin{proof}
The modification is triggered only when $\theta - \tilde\rho(\varphi, \pi) > 0$, which implies that $\delta$ and $D(\varphi, \pi, x^k)$ share the same sign.
As shown in Proposition~\ref{prop:alwaysincrease}, there exists an interval $(0, \Delta)$ in which the gradient value is effective. If the previous modification results in a decrease of $\tilde{\rho}(\varphi, \pi)$, we can ensure an increase in $\tilde{\rho}(\varphi, \pi)$ by decreasing $\lvert \delta \rvert$ to a value smaller than $\Delta$. The proposition holds.
\end{proof}

\begin{algorithm}[t]
\caption{Trajectory repair algorithm}\label{alg:runtime_fix_trajectory_algorithm}
\small
\KwIn{variable/function $x$, time step $k$, magnitude $\delta$}

    \uCase{$x$ is $speed$}{
        Set $speed^k$ to be $speed^k + \delta$\;
    }
    \uCase{$x$ is $direction$}{
        Choose a value $d_0$ (0, 1, or 2) that is closest to $direction^k + \delta$\;
        Set the $steer$ at time $k$ to 0 if $d_0 = 0$\;
        Otherwise set the $steer$ at time $k$ to 0.1 if $d_0 = 1$\;
        Otherwise set the $steer$ at time $k$ to -0.1 if $d_0 = 2$\;
    }
    \Case{$x$ is of the form $D(\_)$ or $Lane(\_)$}{
        Search for a coordinate $(a, b)$ (i.e., new position for the ego vehicle) such that $D(\_)$ becomes $D(\_)+\delta$ or $Lane(\_)$ becomes $Lane(\_) + \delta$\;
        Set the position of the ego vehicle at time $k$ to $(a, b)$\;
    }
\end{algorithm} 

Recall that our goal is to modify the planned trajectory so as to trigger different control commands.
While we may modify the trace arbitrarily, we cannot do the same for the planned trajectory.
First, some of the variables may not be controllable, e.g., the color of the traffic light is beyond the control of the ADS.
Second, a variable may have a specific domain of discrete values in the ADS (e.g., $direction$ has the value of 0, 1, or 2) and thus we can only choose one of those valid values. 
Finally, the value of a variable may be the result of a function which depends on the current and future scenes. 
For instance, $D(stopline)$ measures the distance from the ego vehicle (according to the planned trajectory) to the stop line ahead (according to the map).
In these situations, it is very difficult to translate the modification to the planned trajectory.
Thus, we focus on modifying those signals that the ADS has control over and modify the planned trajectory accordingly, which are $speed$, $acc$, $direction$, $Lane(\_)$ (i.e., which lane the ego vehicle should be in), and $D(\_)$ (i.e., how far the ego vehicle is from a certain artifact). These naturally correspond to what human drivers focus on.
Algorithm~\ref{alg:runtime_fix_trajectory_algorithm} describes how the planned trajectory is repaired with respect to a specific variable/function $x$, time step $k$, and magnitude $\delta$.
Note that the fixes are specific to certain variables since they may have specific domains.
In the case of $direction$, we are constrained to choose a value from ${0, 1, 2}$ and set the value in the trajectory accordingly. 
In the case of functions based on the ego vehicle's position (e.g., $D(stopline)$), we search for nearby coordinates that are close to the desired value while still remaining on the road. Note that the ADS's planning module and the control module are entirely black boxes to us. Therefore, we do not take into account the correlations between variables when modifying the planned trajectory. To do so would require the construction of an exhaustive physical model, essentially equivalent to rebuilding the planning module of the ADS.



\begin{example}
\label{example:mutation}
Given the planned trajectory in Table~\ref{tab:example_of_the_planned_trajectory}, 
and the repair computed for $D(stopline)$ and $D(junction)$ in Example~\ref{example:gradient calculation}. 
We modify the planned car position at time step 6 from \textbf{(0, 35.85)} to \textbf{(0, 28.15)} (so the vehicle should be positioned further from the junction), leaving the remaining planned trajectory unchanged. Note that changing the value of $speed$ can rectify the trajectory as well, however, the gradient values strongly suggest that optimizing the position of the waypoint is a more efficient approach.
\qed
\end{example}

\subsection{Runtime Enforcement}
\label{sec:control_evaluation}

\begin{algorithm}[t]
\caption{Runtime enforcement algorithm}\label{alg:runtime_enforcement_algorithm}
\small
\KwIn{specification $\varphi$, trajectory $\Gamma$, the threshold $\theta$}
Generate trace $\pi$ based on $\Gamma$\;
\If{$\rho(\varphi, \pi) \leq \theta$} 
{
    Identify the smallest $k$ such that $\rho(\varphi,\pi^k) \leq \theta$\;
    Compute $D(\varphi, \pi^k, x^k)$ for every controllable variable $x^k$\;
    Identify variable $x^k$ with the maximal absolute gradient\;
    Invoke Algorithm~\ref{alg:runtime_fix_trajectory_algorithm} to fix the trajectory\;
}
\end{algorithm}

We are now ready to present our runtime enforcement algorithm, as shown in Algorithm~\ref{alg:runtime_enforcement_algorithm}. 
First, we generate a trace $\pi$ based on the planned trajectory $\Gamma$ and check whether $\rho(\varphi, \pi) \leq \theta$.
If so, we proceed to identify the time step $k$ when the robustness degree falls below the threshold. 
Then we compute gradients for the variables at time $k$, identify the controllable one with the maximal absolute gradient (w.r.t.~the robustness degree) and repair the trajectory accordingly.
The repaired trajectory is then sent to the control module, which generates the commands accordingly (e.g., turn on/off beam, and apply brake).

Recall that the specification $\varphi$ may also constrain the generated commands, e.g., the need to signal before turning.
To make sure the commands generated do not violate $\varphi$, we introduce a \emph{control validation module} (refer to Figure~\ref{fig:workflow}) that intercepts and checks the generated commands, modifying them if necessary.
Recall that commands related to motion (e.g., brake, accelerate, steer, and gear) are generated according to the (repaired) trajectory and thus do not require modification.
We remark that these commands are mostly simple in nature (i.e., with Boolean values) and thus we can easily modify them according to the specification. 
For instance, consider the beam-related signals, namely $\mathtt{highBeam}$ and $\mathtt{lowBeam}$, which have \emph{on} and \emph{off} states. We can easily modify these states by switching the values in the control commands sent to the AV's chassis control.

\begin{example}
\label{example:control validation}
Consider the trace shown in Table~\ref{tab:example_of_signals}.
Recall that the signals $\mathtt{fogLight}$ and $\mathtt{warningFlash}$ are not part of the planned trajectory: in fact, the ADS turns these off by default, i.e., we initially have $\alpha(\mathtt{x}_{i,j}) = \mathtt{false}$~($\mathtt{x}_{i,j}$ is the placeholder variable as discussed in Section~\ref{sec:constrct_trace}). 
To satisfy the specification in Example~\ref{example:specifications}, \coolname sets $\alpha(\mathtt{x}_{i,j}) = \mathtt{true}$ for each $i,j$. To realize this, we activate the $\mathtt{fogLight}$ and $\mathtt{warningFlash}$ in the control commands.
\qed
\end{example}

\section{Implementation and Evaluation}
\label{sec:implementation_evaluation}
We implemented \coolname for Apollo 6.0 and 7.0~\cite{apollo60,apollo70}. The code is on our website~\cite{ourweb2}.
In particular, we built a bridge program that interprets Apollo's messages (in JSON format) and obtains the values of variables and functions used by the specification language.
Some of these values are obtained directly (e.g., $speed$ and $acc$), but some require complex processing.
For example, to get the value of variable $NPCAhead.speed$ at time $t$, we obtain the planned position of the ego vehicle at time $t$ from the planning module, and check every NPC vehicle's predicted trajectory from the prediction module to identify the one that is ahead of the ego vehicle at time $t$.
Our implementation relies on a third party component provided by LawBreaker~\cite{Sun-Poskitt-et_al22a}.
In particular, we utilise the tool's specification language and the corresponding verification algorithm.

We conducted experiments to answer the following Research Questions~(RQs): 
\begin{itemize}
    \item[\textbf{RQ1}:] Can \coolname be used to enforce non-trivial specifications?
    \item[\textbf{RQ2}:] How much overhead is there for runtime enforcement?
    \item[\textbf{RQ3}:] Does \coolname minimise the enforcement? 
\end{itemize}
RQ1 considers whether \coolname achieves its primary goal of being able to enforce complex specifications (i.e.,~beyond collision avoidance).
RQ2 and RQ3 consider whether \coolname implements its enforcement in a way that is practically reasonable.
The former focuses on the overhead of runtime enforcement, since AVs are expected to react quickly on the road.
The latter focuses on the magnitude of the repair to the original trajectory, i.e., the enforcement should take place only if necessary and should minimally alter the behaviour of ADS.

Our experiments were run in the high-fidelity LGSVL simulator~\cite{rong2020lgsvl}.
Due to randomness in the simulator (mostly due to concurrency), each experiment was executed 100 times and we report the averages. 
The threshold was determined in a preliminary experiment in which we ran Apollo multiple times for each scenario to get the range of the possible robustness values.
All experiments were obtained using two machines with 32GB of memory, an Intel i7-10700k CPU, and an RTX 2080Ti graphics card. 
The machines respectively use Linux (Ubuntu 20.04.5 LTS) and Windows (10 Pro).

\begin{table}[!t]
\setlength{\abovecaptionskip}{0.cm}
\footnotesize
\centering
\caption{Violations of Chinese traffic laws}
\label{tab:fix_condition_of_RED}
\begin{tabular}{|>{\centering}m{0.07\linewidth}|>{\centering}m{0.05\linewidth}|c|c|>{\centering}m{0.1\linewidth}|c|c|}
\hline
\multicolumn{2}{|c|}{  \multirow{2}{*}{\textbf{traffic laws}}}  & \multicolumn{2}{c|}{ \textbf{enforced?}} & \multirow{2}{*}{\textbf{improve}} & \multirow{2}{*}{\textbf{fail reason}} & \multirow{2}{*}{\textbf{content}}\\ \cline{3-4}
\multicolumn{2}{|c|}{}                      & 6.0 & 7.0 &  & &  \\ \hline
\multirow{3}{*}{Law38}  & sub1              & $\surd$  & $\surd$    &   \multirow{3}{*}{+55.83\%}  &     -            & green light\\ \cline{2-4} \cline{6-7}
                            & sub2          & $\surd$  & $\surd$    &     &     -            & yellow light\\ \cline{2-4} \cline{6-7}
                            & sub3          & $\surd$  & $\surd$    &     &     -           & red light\\ \hline
\multicolumn{2}{|c|}{Law44}                 & $\surd$  & $\surd$    &   +30.00\%  &     -     & lane change\\ \hline       
\multirow{2}{*}{Law46}  & sub2              & $\surd$  & $\surd$    &   +44.00\%  &     -    & speed limit\\ \cline{2-7} 
                            & sub3          & $\times$  & $\times$  &-          & Lack support             & speed limit\\ \hline
\multicolumn{2}{|c|}{Law47}                 & $\times$  & $\times$  &-          & Lack support           & overtake\\ \hline  
\multirow{3}{*}{Law51}      & sub3          & $\times$  & $\times$  &-          & Lack support          & traffic light\\ \cline{2-7} 
                            & sub4          & $\surd$  & $\surd$    &  \multirow{2}{*}{+35.00\%}  &      -     & traffic light\\  \cline{2-4} \cline{6-7}
                            & sub5          & $\surd$  & $\surd$    &    &      -    & traffic light\\ \hline  
\multirow{2}{*}{Law57}  & sub1              & $\times$  & $\times$  &-          & Lack support          & left turn signal\\ \cline{2-7} 
                        & sub2              & $\times$  & $\times$  &-          & Lack support           & right turn signal\\ \hline 
\multicolumn{2}{|c|}{Law58}                 & $\times$  & $\times$  &-          & Lack support          & warning signal\\ \hline  
\multicolumn{2}{|c|}{Law59}                 & $\times$  & $\times$  &-          & Lack support            & signals\\ \hline 
\end{tabular}
\vspace{-0.5cm}
\end{table}

\textbf{\emph{RQ1: Can \coolname be used to enforce non-trivial specifications?}}
To answer this question, we adopted the formalisation of traffic laws reported in~\cite{Sun-Poskitt-et_al22a} as our specification and evaluated whether \coolname can be applied so that the ADS follows them. We remark the traffic laws are rather complicated as they model 13 testable traffic laws with many sub-clauses. Furthermore, we use the benchmark of scenarios provided by~\cite{Sun-Poskitt-et_al22a} in which Apollo is known to violate the specification.
We replay these violation-inducing scenarios with \coolname enabled, and report in Table~\ref{tab:fix_condition_of_RED} whether our approach is able to prevent the violations from occurring.
Each `subX' in Table~\ref{tab:fix_condition_of_RED} represents sub-rules of a traffic laws. For example, Law38 pertains to traffic light regulations and has three sub-rules, each covering the yellow, green, and red lights, respectively. The `enforced?' column indicates whether the enforcement is successful.
The enforcement was considered to be successful if the average passing rate of the specification after the enforcement is more than 50\% across all the repetitions. 
Note that Apollo's passing rate is always below 50\% for the selected scenarios.
The \emph{improve} column in Table~\ref{tab:fix_condition_of_RED} reports the average improvement of \coolname over Apollo, 
i.e., the maximum enhancement achieved by \coolname across a threshold value spectrum ranging from 0.0 to 1.2, with intervals of 0.1.
The improvement is calculated by subtracting the pass rate of Apollo from the pass rate of \coolname.
Note that since the sub laws of $\mathtt{law38}$ and $\mathtt{law51}$ are closely related, they are evaluated together for \emph{avg improve}.
The only reason that we cannot enforce some of the failed laws is that some simulator support is currently lacking. For example, we generate a command to turn on $\mathtt{fogLight}$ to satisfy the $\mathtt{law58}$ as shown in Example~\ref{example:specifications} and LGSVL's car model
ignores the command since it currently does not support fog lights.
We mark \emph{Lack support} in the table to illustrate this. 
The detailed improvement for all thresholds is shown in Figure~\ref{fig:overall_improve}. In this figure, the x-axis denotes the threshold value ($\theta$), while the y-axis signifies the average percentage improvement of \coolname over Apollo. As shown in Figure~\ref{fig:overall_improve},
\coolname successfully enforced all cases where an enforcement is feasible. 

\begin{figure}
    \centering
    \includegraphics[width=0.45\textwidth]{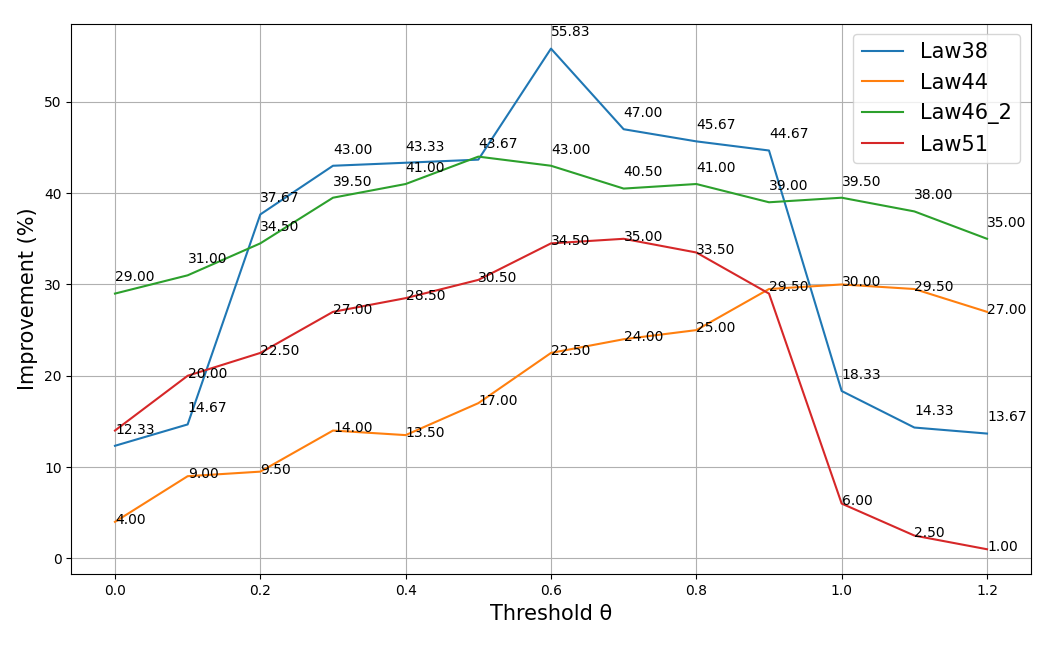} 
    \vspace{-0.5cm}
    \caption{Improvement of performance across thresholds}
    \label{fig:overall_improve}
    \vspace{-0.5cm}
\end{figure}

\begin{table}[t]
\setlength{\abovecaptionskip}{0.cm}
\centering
\caption{Performance comparison of \coolname and Apollo}
\label{tab:performance compare}
{\small
\begin{tabular}{>{\centering}m{0.12\linewidth} c >{\centering}m{0.03\linewidth} ccc >{\centering}m{0.03\linewidth} c}
\hline
Scenario  & Driver & $\theta$ & pass/total  & robustness  & avg time \\\hline

\multirow{14}{*}{\makecell[c]{Double \\Lined\\ Junction}}  
& Apollo7.0      & -    &   \textbf{40/100}         &  0.27      & 71.11s\\ 
& REDriver    & 0.0  &   \textbf{73/100}            & 0.67    & 55.18s\\ 
& REDriver    & 0.1  &   \textbf{78/100}            &   0.75 & 51.11s\\
& REDriver    & 0.2  &   \textbf{85/100}            &  0.96  & 50.56s\\ %

& REDriver    & 0.3  &   \textbf{93/100}            & 1.05    & 45.67s\\ 
& REDriver    & 0.4  &   \textbf{95/100}              & 1.07   & 45.71s\\
& REDriver    & 0.5  &   \textbf{93/100}              & 1.10   & 45.32s\\ %

& REDriver    & 0.6  &   \textbf{98/100}            & 1.19    & 44.92s\\ 
& REDriver    & 0.7  &   \textbf{99/100}            & 1.21  & 44.90s\\
& REDriver    & 0.8  &   \textbf{96/100}             & 1.15   & 45.13s\\ %
& REDriver    & 0.9  &   \textbf{99/100}    & 1.20    & 45.07s\\ 
& REDriver    & 1.0  &   \textbf{80/100}    &  0.78 & 51.90s\\
& REDriver    & 1.1  &   \textbf{75/100}    &  0.71  & 52.71s\\ %
& REDriver    & 1.2  &   \textbf{81/100}    & 0.80    & 53.10s\\ 
\hline
   
\multirow{14}{*}{\makecell[c]{Single \\Direction\\ Junction}} 
& Apollo7.0            & -          &   \textbf{18/100}   & 0.18    & 57.60s\\     
& REDriver          & 0.0           &   \textbf{26/100}   & 0.38    & 56.38s\\ 
& REDriver          & 0.1           &   \textbf{24/100}   & 0.37    & 55.93s\\
& REDriver          & 0.2           &   \textbf{85/100}   & 0.84    & 55.95s\\ %
& REDriver          & 0.3           &   \textbf{89/100}   & 0.86    & 55.75s\\ 
& REDriver          & 0.4           &   \textbf{89/100}   & 0.89    & 55.69s\\
& REDriver          & 0.5           &   \textbf{88/100}   & 0.87    & 56.71s\\ %
& REDriver          & 0.6           &   \textbf{91/100}   & 0.92    & 56.18s\\ 
& REDriver          & 0.7           &   \textbf{95/100}   & 0.99    & 56.13s\\
& REDriver          & 0.8           &   \textbf{93/100}   & 0.96   & 57.09s\\ %
& REDriver          & 0.9           &   \textbf{96/100}    & 1.02    & 57.55s\\  
& REDriver          & 1.0           &   \textbf{35/100}    & 0.56  & 57.10s\\
& REDriver          & 1.1           &   \textbf{31/100}    & 0.45   & 56.79s\\ %
& REDriver          & 1.2           &   \textbf{24/100}    & 0.33    & 56.61s\\ 
\hline
   
\multirow{14}{*}{T-Junction} 
& Apollo7.0         & -         &   \textbf{45/100}          &  0.29  & 64.66s\\ 
& REDriver          & 0.0       &   \textbf{41/100}    &  0.32  & 64.10s\\ 
& REDriver          & 0.1       &   \textbf{45/100}   &   0.43  & 63.93s\\
& REDriver          & 0.2       &   \textbf{46/100}   &  0.59   & 60.95s\\ %
& REDriver          & 0.3       &   \textbf{50/100}    & 0.77   & 61.82s\\ 
& REDriver          & 0.4       &   \textbf{49/100}    & 0.79  & 60.76s\\
& REDriver          & 0.5       &   \textbf{53/100}    & 0.45   & 62.30s\\ %
& REDriver          & 0.6       &   \textbf{55/100}    & 0.39   & 61.92s\\ 
& REDriver          & 0.7       &   \textbf{50/100}    &  0.41 & 61.70s\\
& REDriver          & 0.8       &   \textbf{51/100}    &  0.35  & 62.89s\\ %
& REDriver          & 0.9       &   \textbf{42/100}    & 0.33   & 63.23s\\ 
& REDriver          & 1.0       &   \textbf{43/100}    & 0.37  & 63.45s\\
& REDriver          & 1.1       &   \textbf{40/100}    & 0.40   & 63.70s\\ %
& REDriver          & 1.2       &   \textbf{39/100}    & 0.41    & 64.07s\\ 
\hline
\end{tabular}
}
\vspace{-0.5cm}
\end{table}

To explore RQ1 in more detail, we designed a second experiment that focused on $\mathtt{law38}$---one of the most complicated formulae in~\cite{Sun-Poskitt-et_al22a}---which specifies how a vehicle should behave at a traffic light junction (i.e., the constraints on movements due to green/yellow/red lights).
We then selected three scenarios highly relevant to this traffic law: \emph{Double Lined Junction}, \emph{Single Direction Junction}, and \emph{T Junction}.
The detailed specification $law38$ is given in our website~\cite{ourweb2}.
Note that these scenarios were generated by LawBreaker~\cite{Sun-Poskitt-et_al22a} to reliably induce traffic law violations in Apollo.
The seeds for the fuzzing algorithm are given on the website~\cite{ourweb2}.

We tested Apollo with and without \coolname on each violation-inducing scenario 100 times and recorded the pass rate and average robustness with respect to $\mathtt{law38}$.
Table~\ref{tab:performance compare} presents the result of our evaluation using Apollo 7.0 
(the results for version 6.0 are on our website~\cite{ourweb2}), where $\theta$ indicates threshold values.
Recall that the smaller the robustness value, the `closer' is a violation.

As can be seen from Table~\ref{tab:performance compare}, \coolname significantly outperforms the original Apollo in terms of respecting the specification.
In scenarios ``Double-Lined Junction'' and ``Single-Direction Junction'', the original Apollo failed to pass at the green light because it is too conservative at the junction. For instance, Apollo sometimes decides to stop before an intersection when there is enough space for the vehicle to pass safely.
\coolname avoided the violations by enforcing the vehicle to drive within the junction first or pass the junction directly. 
As a consequence, the average improvement to the pass rate is more than 50\% for scenario ``Double-Lined Junction'' and ``Single-Direction Junction''. 
For scenario ``T-Junction'', the improvement of \coolname is relatively small since there is heavy traffic in this scenario and Apollo sometimes produces the stop command.
By design, the stop command has higher priority than the planned trajectory since it prevents crashing in urgent situations.
Therefore, the enforcement did not take effect in some cases.

Furthermore, the performance of \coolname varies with threshold $\theta$, i.e., $\theta$ being too small or too large both lead to degraded performance.
If $\theta$ is too small, for example $0$ (which is equivalent to the driver being ignorant of what is going to happen), sometimes the ADS cannot enforce in time.
But if $\theta$ is too large (e.g., 1.0-1.2 which is equivalent to the driver being too scared of what might happen), \coolname is overcompensating and fixing things it should not, which can lead to unexpected ADS behaviour. 
Note that a significant number of valid trajectories have a robustness of $1$, and such an overreaction is more likely to occur for $\theta \geq 1$.

\begin{table}[t]
\setlength{\abovecaptionskip}{0.cm}
\centering
\caption{Overhead of \coolname}
\label{tab:overhead time compare}
{\small
\begin{tabular}{c  c c c c c c c}
\hline
 & $\theta$ & avg fix & max fix & fix (\%) & avg(ms) & max(ms) & time (\%)
\\\hline
\multirow{13}{*}{\makecell[c]{S1}}
     & 0.0       & 26.08 &  35   &  5.09\% & 1.92    & 6.82    & 4.88\% \\ 
      & 0.1       & 24.19 &   35  &  4.76\% & 1.88    & 9.11    & 4.81\% \\ 
     & 0.2       & 19.20  &  35   &  4.15\% & 1.90    & 9.10    & 4.72\% \\ 
     & 0.3       & 18.33 & 32    &  3.57\% & 1.87    & 9.23    & 5.05\% \\ 
    & 0.4       & 22.33 & 35    &  3.89\% & 1.85    & 9.86    & 4.98\% \\ 
      & 0.5       & 33.12 & 45    &  6.19\% & 1.91    & 8.81    & 4.90\% \\ 
     & 0.6       & 32.06 & 45    &  6.07\% &  1.72   & 9.08    & 4.89\% \\ 
     & 0.7       & 33.20 & 45    &  6.19\% & 1.88    & 9.12    & 5.04\% \\ 
     & 0.8       & 39.75 & 93    &  7.31\% & 1.95    & 9.10    & 4.92\% \\ 
     & 0.9       & 40.18 & 102   &  7.45\% &   1.96  & 10.53    & 4.85\% \\ 
     & 1.0      & 87.20 & 230    &  17.29\% & 2.13    & 11.15    & 6.35\% \\ 
     & 1.1      & 86.02 & 231    &  17.14\% & 2.05    & 10.55    & 6.01\% \\ 
    & 1.2       & 86.05 &  230  &  16.79\% &   2.11  & 11.13    & 6.26\% \\ 
\hline
    
\multirow{13}{*}{\makecell[c]{S2}} 
     & 0.0       & 10.22 & 17    &  2.51\% &   1.67  & 7.01    & 4.52\% \\ 
      & 0.1       & 10.71 & 17    &  2.55\% &   1.53  & 6.97    & 4.49\% \\ 
     & 0.2       & 10.56 & 20    &  2.78\% &   1.55  & 6.74    & 4.51\% \\ 
      & 0.3       & 12.05 & 22    &  2.93\% &   1.56  & 6.72    & 4.22\% \\ 
     & 0.4       & 12.10 & 22    &  2.90\% &   1.55  & 7.15    & 4.21\% \\ 
     & 0.5       & 12.21 & 20    &  2.95\% &   1.66  & 6.97    & 4.38\% \\ 
     & 0.6       & 12.23 & 20    &  2.98\% &   1.54  & 7.53    & 4.31\% \\ 
     & 0.7       & 12.48 & 22    &  2.73\% &   1.52  & 7.19    & 4.47\% \\ 
     & 0.8       & 14.35 & 22    &  3.11\% &   1.54  & 6.80    & 4.19\% \\ 
    & 0.9       & 14.75 & 20    &  3.60\% &   1.49  & 6.78    & 4.28\% \\ 
     & 1.0       & 170.15& 177    &  42.57\% &  1.88  & 9.15    & 5.89\% \\ 
     & 1.1       & 169.12& 185    & 40.82\% & 1.75    & 9.23    & 5.30\% \\ 
     & 1.2       & 169.20& 177   &  41.17\% &   2.09  & 9.09    & 5.61\% \\ 
\hline

\multirow{13}{*}{S3} 
     & 0.0       & 38.92 & 51    &  9.02\% &   1.63  & 9.14    & 4.18\% \\ 
     & 0.1       & 37.03 & 49    &  8.87\% &   1.67  & 9.21    & 4.50\% \\ 
     & 0.2       & 37.71 & 49    &  8.89\% &   1.70  & 9.45    & 4.39\% \\ 
     & 0.3       & 37.28 & 49    &  8.96\% &   1.84  & 9.21    & 4.54\% \\ 
     & 0.4       & 35.22 & 52    &  8.75\% &   1.82  & 9.50    & 4.43\% \\ 
     & 0.5       & 37.33 &  49   &  9.01\% &   1.71  & 9.34    & 4.70\% \\ 
     & 0.6       & 34.70 &   52  &  8.05\% &   1.69  & 10.13   & 4.90\%  \\ 
      & 0.7       & 33.13 & 52     &  7.80\% & 1.75    & 9.12    & 4.82\% \\ 
     & 0.8       & 35.56 &  40    &  8.35\% & 1.70    & 9.29    & 4.88\% \\ 
     & 0.9       & 32.46     &   40      &  7.76\% &   1.77  & 9.17    & 4.91\% \\ 
      & 1.0       & 233.73    & 298       &  52.05\% & 2.19    & 11.21    & 5.83\% \\ 
     & 1.1       & 232.76    & 298       &  52.18\% & 2.27    & 11.20    & 5.79\% \\ 
     & 1.2       & 234.46    &   298     &  51.99\% &   2.22  & 11.34    & 5.85\%  \\ 
\hline
\end{tabular}
}
\vspace{-0.5cm}
\end{table}

\textbf{\emph{RQ2: How much overhead does the runtime enforcement impose?}} 
To answer this question, we collect information on the running time of the plan validation module of \coolname for different scenarios. The overhead for control validation is very small~(less than 0.01\% of the the overhead of the plan validation module), and we ignore it in the later experiment.
The detailed data for \coolname based on Apollo 7.0 is shown in Table~\ref{tab:overhead time compare} (the results for version 6.0 are shown on our website~\cite{ourweb2}).
Here, \emph{S1-S3} corresponds to the \emph{Double Lined Junction}, \emph{Single-Direction Junction}, and \emph{T-Junction} as in Table~\ref{tab:performance compare}, \emph{avg fix} represents the average number of fixes during a test that successfully enables the ADS to follow the specification, \emph{max fix} represents the maximum number of fixes detected across the test cases, \emph{avg(ms)} means the average time consumption of the plan validation module in one run, \emph{max(ms)} indicates the maximum time consumption detected, \emph{fix ($\%$)} is calculated by dividing the average fixes by the average updates of the planned trajectory during a run, and \emph{time ($\%$)} is calculated by dividing the average time consumption of the plan validation module by the average time consumption of the production of a planned trajectory. The time units in Table~\ref{tab:overhead time compare} are all milliseconds.

As can be seen from Table~\ref{tab:overhead time compare}, the time consumption of the plan validation module is practical, i.e., the average time consumption is always smaller than 2.5 milliseconds, the max time consumption is always smaller than 12 milliseconds, and the \emph{time percent} is always within $6\%$.
Furthermore, the number of fixes is related to the value of $\theta$ as expected.
There is a large increase in the number of fixes and \emph{fix percent} for a large {$threshold \geq 1.0$ across all three scenarios.
This is consistent with the performance degradation at $threshold \geq 1.0$ shown in Table~\ref{tab:performance compare} since unnecessary fixes cause problems.

\textbf{\emph{RQ3: Does \coolname minimise the enforcement?}}
To answer this question, first, recall our approach as described in Section~\ref{sec:our_approach}.
We identify the smallest k with $\rho(\varphi, \pi^k) < \theta$. Hence, our fix applies only to the earliest part of the planned trajectory that leads to the near-violation of the specification.
In most cases, we only modify one variable at one time step, such as the ``speed'' at some time step. For some rare cases, we may modify multiple variables at one time step, such as the ``speed'' and ``position'', if modifying one single variable is not sufficient. Note that the ADS updates the planned trajectory based on current perceptions and predictions and the impact of our change does not accumulate.

\begin{figure}
    \centering
    \includegraphics[width=0.45\textwidth]{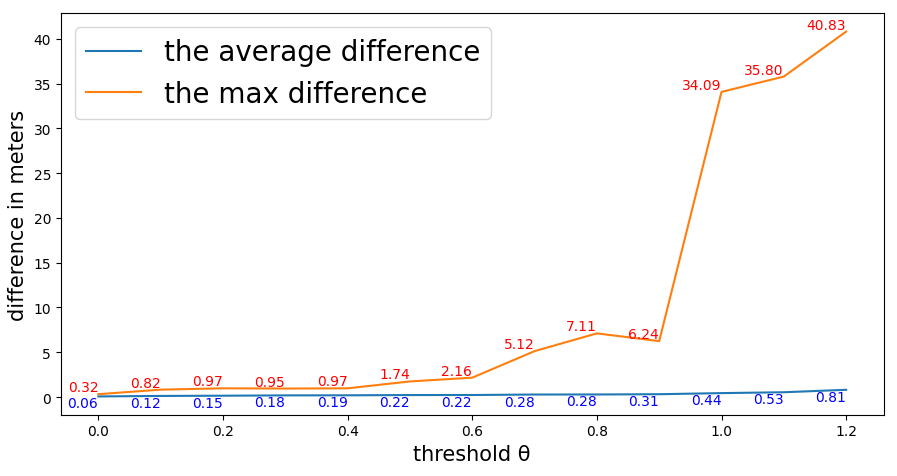} 
    \caption{Magnitude of modifications to planned trajectories}
    \label{fig:modification-diff}
\end{figure}

Here, we require a method to assess the variance between the modified planned trajectory and the original trajectory. This quantification is calculated by assessing the positional variance between these trajectories. When alterations are made to the speed or acceleration, we translate these changes into positional differences.  To be precise, the conversion for speed discrepancies is determined as follows: $(speed_t' - speed_t) \cdot I$, and for acceleration discrepancies: $(acc_t' - acc_t) \cdot I^2$, where $I$ signifies the time interval between the current planned \emph{waypoint} and the subsequent \emph{waypoint}. 
For instance, if a speed adjustment of magnitude $2m/s$ is applied to a planned waypoint, and the time interval $I$ is $0.2s$, then the positional difference is calculated as $2m/s \cdot 0.1s = 0.2 m$. The magnitude of modifications is shown in Figure~\ref{fig:modification-diff}. In this figure, the x-axis represents the threshold value $\theta$, and the y-axis represents the magnitude of the modification of \coolname in meters. 
The graph presented in the figure denotes the average/max modification value of \coolname across thousands of fixes. 
Notably, for thresholds ranging from $0.0$ to $1.2$, the average difference consistently remains below 1 meter. This observation suggests that \coolname's modification on the planned trajectory is small. Note that there is a significant increase in max difference for threshold $1.0$. This phenomenon is attributed to an excessive number of unnecessary fixes, as explained in RQ1.

In addition, the average running time for the test cases is listed in the last column of Table~\ref{tab:performance compare}. 
Here, \emph{avg time} represents the average time spent by the ADS to travel from the start point to the destination.
As can be seen from  the last column of Table~\ref{tab:performance compare}, the running time did not increase across all these test cases.
This indicates that \coolname did not, in practice, force the ADS to produce a substantially different trajectory to follow (e.g., halting the car).
Note in addition that the time consumption of \coolname has dropped substantially compared to the original Apollo in the scenario ``Double-Lined Junction''.
This is because Apollo hesitated at the green light, while \coolname successfully passed through.

\section{Related Work} \label{sec:related}

Runtime verification approaches monitor messages obtained from ADSs and evaluate them against a specification using a number of different techniques.
For instance, Kane et al.~\cite{Kane-et_al15a} generate a system trace from the observed network state, and Heffernan et al.~\cite{Heffernan-MacNamee-Fogarty14a} use system-on-a-chip based monitors as sources of information. 
Watanabe et al.~\cite{Watanabe-et_al18a} focus on runtime monitoring of the controller safety properties of advanced driver-assistance systems (ADASs).
Mauritz et al.~\cite{Mauritz-et_al16a} generate monitors for ADAS features from safety requirements and by training on simulators.
D'Angelo et al.~\cite{d2005lola} present Lola, a simple and expressive specification language to describe both correctness/failure assertions, which has been successfully deployed on autonomous vehicles in addition to many successful flight deployments.
Note that there is no enforcement of specifications in the works mentioned above. 

Runtime enforcement goes beyond monitoring and attempts to enforce certain safety properties. Existing works~\cite{hong2020avguardian, Cheng-et_al21a, Shankar-et_al20a, Grieser-et_al20a} already propose a few methods for runtime enforcement of ADSs. 
AVGuardian~\cite{hong2020avguardian} performs static analysis of the communication messages between the ADS modules to generate control policies and enforce them during runtime.
Guardauto et al.~\cite{Cheng-et_al21a} divide the ADS into a few partitions for the detection of rogue behaviours and restart the partition in order to clear them. Shankaro et al.~\cite{Shankar-et_al20a} define a policy using an automaton and enforce the car to stop when the policy is violated. Grieser et al.~\cite{Grieser-et_al20a} build an end-to-end neural network (from LIDAR to torques/steering) that implicitly picks up safety rules. Simultaneously, the distance to obstacles on the current trajectory is monitored and emergency brakes are applied if a collision is likely.
Generally, when enforcement for an ADS takes place in these works, it tends to be quite `weak', (e.g.~emergency brake).
\coolname, on the other hand, provides runtime enforcement for a rich specification in ways that are less intrusive.



In addition, there are existing works for cyber-physical systems~\cite{Pinisetty-et_al17a, Wu-et_al17a, Wu-et_al19a}. Pinisetty et al.~\cite{Pinisetty-et_al17a} formalise the runtime enforcement problem for CPSs, where policies depend not only on a controller but also an environment.
Another approach, Safety Guard~\cite{Wu-et_al17a}, adds automata-based reactive components to the original system, which react to ensure a predefined set of safety properties, while also keeping the deviation from the original system to a minimum. 
ModelPlex~\cite{mitsch2016modelplex} checks for model compliance of cyber-physical systems and includes a fail-safe action to avoid violations of safety properties. CBSA~\cite{phan2017component} proposes the idea of integrating assume-guarantee reasoning to allow runtime assurance of cyber-physical systems.  
These works are relevant to the runtime enforcement of ADSs since ADSs are cyber-physical systems as well. However, we can not directly apply these methods to ADSs and customization of the enforcement techniques is necessary due to the unique requirements and challenges posed by ADSs. For instance, the enforcement of ADSs requires consideration of not only the current control commands but also the planned trajectory.

Runtime enforcement is not limited to ADSs, i.e., there are works providing runtime enforcement/verification for general systems (e.g.,~\cite{sha2001using, bak2009system, Bloem-et_al15a, Schneider00a, Falcone-et_al12a, Ligatti-et_al09a, Falcone-et_al21a, desai2019soter, wongpiromsarn2011tulip, finucane2010ltlmop, schierman2015runtime, desai2017combining, copilot, schumann2015r2u2}).
The Simplex architecture~\cite{sha2001using, bak2009system} introduces the idea of ``runtime enforcement'' to enhance the reliability of complex software, and has been widely adopted in both academia and industry.
Shield synthesis~\cite{Bloem-et_al15a} proposes a method of runtime enforcement for reactive systems while also minimising interference to the original behaviour. Schneider~\cite{Schneider00a} looks at runtime enforcement of security policies and stops the program when they are violated. Falcone et al.~\cite{Falcone-et_al12a} propose enforcement by buffering actions and dumping them only when deemed safe. Ligatti et al.~\cite{Ligatti-et_al09a} use `edit automata' to respond to dangerous actions by suppressing them or inserting other actions. 
Desai et al.~\cite{desai2017combining} enforce the plan trajectory of mobile robots so as to follow STL specifications. Expanding upon this idea, Soter~\cite{desai2019soter} allows for safety properties to be specified and enforced in robotic systems. 
Tools such as TuLip~\cite{wongpiromsarn2011tulip} and LTLMoP~\cite{finucane2010ltlmop} synthesize trajectories to assist evaluation of the control system under linear temporal logic (LTL) specifications. Barron Associates provide a comprehensive study of runtime enforcement architecture for highly adaptive flight ontrol systems~\cite{schierman2015runtime}. 
The Copilot tool~\cite{copilot} offers a comprehensive runtime enforcement environment that incorporates numerous operating-system-like functionalities.
The R2U2~\cite{schumann2015r2u2} monitors the security properties of on-board Unmanned Aerial Systems (UAS) and is implemented in FPGA hardware. 
Unfortunately, many existing general runtime enforcement/verification methods are not suitable for ADSs due to their safety-critical and highly interactive nature. The survey paper by Falcone et al.~\cite{Falcone-et_al21a} on existing runtime enforcement/verification tools clarifies that many `reactions' provided by general runtime verification tools are weak, which is not acceptable for our situation. 
In this paper, we propose a runtime enforcement method applicable to any given specification with acceptable overhead for ADSs, and our method concerns not only the current driving conditions but also the ADS's future plans.

\section{Conclusion}
\label{sec:conclusion}
We proposed, \coolname, a solution to the runtime enforcement problem for ADSs. 
\coolname supports the enforcement of complex user-provided specifications such as national traffic laws in a way which is similar to experienced human drivers, i.e., based on near-future predictions and proactively correcting the vehicle's trajectory accordingly with minimal adjustment.

\section*{Acknowledgment}
We are grateful to the anonymous ICSE referees for their insights and feedback, which have helped to improve this paper.
This research is supported by the Ministry of Education, Singapore under its Academic Research Fund Tier 3 (Award ID: MOET32020-0004). Any opinions, findings and conclusions or recommendations expressed in this material are those of the author(s) and do not reflect the views of the Ministry of Education, Singapore.

\bibliographystyle{ACM-Reference-Format}
\bibliography{reference}

\appendix

\end{document}